\newcommand{\abc}[3]{$#1 \,|\, #2 \,|\, #3$}
\newcommand{\eps}{\ensuremath{\varepsilon}}
\newcommand{\Sch}{\ensuremath{\mathcal{S}}}  
\newtheorem{observation}{Observation}
\newcolumntype{Y}{>{\raggedright\arraybackslash}X}
\newcolumntype{Z}{>{\centering\arraybackslash}X}
\newcommand{\reserved}{utilized}
\newcommand{\reserve}{utilize}
\newcommand{\reserves}{utilizes}
\newcommand{\reserving}{utilizing}
\newcommand{\reservation}{utilization}
\newcommand{\tff}{tariff}
\newcommand{\tffs}{tariffs}
\begin{document}

\mainmatter 

\title{Optimal Algorithms for Scheduling under\\ Time-of-Use Tariffs
    \thanks{A preliminary version of this paper with a subset
    of results appeared in the Proceedings of MFCS 2015
    as~\cite{ChenMRSV15}. This research was supported by the German Science
    Foundation (DFG) under contract ME~3825/1.}}


\author{\mbox{Lin Chen\inst{1} \and Nicole Megow\inst{2} \and Roman
    Rischke\inst{3} \and Leen Stougie\inst{4} \and Jos\'{e}
    Verschae\inst{5}}}


\institute{
  Department of Computer Science, University of Houston, USA.  \email{chenlin198662@gmail.com}%
  \and
  Department of Mathematics and Computer
  Science, University of Bremen, Germany. \email{nicole.megow@uni-bremen.de}%
  \and
  Department of Video Coding \& Analytics, Fraunhofer Heinrich Hertz Institute Berlin, Germany. \email{roman.rischke@gmail.com}%
  \and CWI \& Department of Econometrics and Operations Research, Vrije Universiteit Amsterdam \& Erable-INRIA, The
Netherlands. \email{stougie@cwi.nl} \and Institute of Engineering Sciences, Universidad de O'Higgins, Chile. \email{jose.verschae@uoh.cl} }

\maketitle

\begin{abstract}
  We consider a natural generalization of classical scheduling
  problems in which using a time unit for processing a job causes some
  time-dependent cost which must be paid in addition to the standard
  scheduling cost.  We study the scheduling objectives of minimizing
  the makespan and the sum of (weighted) completion times. It is not
  difficult to derive a polynomial-time algorithm for preemptive
  scheduling to minimize the makespan on unrelated machines.  The
  problem of minimizing the total (weighted) completion time is
  considerably harder, even on a single machine.  We present a
  polynomial-time algorithm that computes for any given sequence of
  jobs an optimal schedule, i.e., the optimal set of time-slots to be
  used for scheduling jobs according to the given sequence. This
  result is based on dynamic programming using a subtle analysis of
  the structure of optimal solutions and a potential function
  argument. With this algorithm, we solve the unweighted problem
  optimally in polynomial time.     
  For the more general problem, in which jobs may have individual weights, we develop a polynomial-time approximation scheme~(PTAS) based on a dual scheduling approach introduced for scheduling on a machine of varying speed. As the weighted problem is strongly NP-hard, our PTAS is the best possible approximation we can hope for.
\end{abstract}

\section{Introduction}
\label{sec:intro}

One of the classical operations research problems is the Production
Planning problem. It appears in almost any introductory course in
Operations Research \cite{Hillier,Taha}. In its deterministic form a
production plan at lowest total cost is required to meet known
demands in the next few weeks, given holding cost for keeping
inventory at the end of the week, and with unit production cost
varying over the weeks. It is a very early example of a problem
model in which unit cost, or tariffs, for production, service,
labor, energy, etc., {\em vary over time}.

Nowadays, new technologies allow direct communication of a much
larger variety of time-of-use tariffs to customers.
E.g. in energy practice electricity prices can differ largely over
the hours. Producers or providers of these resources use these
variable pricing more and more to spread demand for their services,
which can save enormously on the excessive costs that are usually
involved to serve high peak demands. Customers are persuaded to
direct their use of the scarce resources to time slots that are
offered at cheaper rates. From the provider's point of view variable
pricing problems have been studied quite extensively. For instance,
revenue management is a well established subfield of operations
research \cite{talluri2006theory}.

As in the Production Planning problem, in this paper we advocate
models from the point of view of the user of the resources, who may
take advantage from variable pricing by traveling, renting labor,
using electricity, etc. at moments at which these services are
offered at a lower price. This point of view forms a rich class
of optimization problems in which next to classical objectives, the
cost of using services needs to be taken into account.

This widely applicable framework is particularly well suited
for scheduling problems, in which jobs need to be scheduled over
time. Processing jobs requires labor, energy, computer power, or other resources
that often exhibit variable tariffs over time. It leads to the
natural generalization of scheduling problems, in which using a time
slot incurs certain cost, varying over time, which we refer to as
{\em utilization cost} that must be paid in addition to the actual
scheduling cost. However natural and practicable this may seem,
there appears to be very little theoretical research on such
scheduling models. The only work we are aware of is by Wan and
Qi~\cite{WanQi2010},  Kulkarni and Munagala~\cite{KulkarniM2012},
Fang et al.~\cite{FangEtAl2016} and Chen and Zhang~\cite{ChenZ18}, where variable tariffs concern the
cost of labor or the cost of energy.

The goal of this paper is to expedite the theoretical understanding
of fundamental scheduling problems within the framework of
time-varying costs or tariffs. We contribute optimal polynomial-time
algorithms and best possible approximation algorithms for the
fundamental scheduling objectives of minimizing the sum of
weighted completion times and the makespan.

\subsection{Problem definition}

We first describe the underlying classical scheduling problems. We
are given a set of jobs $J:= \left\{ 1,\ldots,n\right\}$ where every
job $j\in J$ has a given processing time $p_j\in \mathbb{N}$
and possibly a weight $w_j \in \mathbb{Q}_{\geq 0}$.  The
objective is to find a preemptive schedule on a single machine
such that the total (weighted) completion time, $\sum_{j\in
J}w_jC_j$, is minimized; here $C_j$ denotes the completion time of
job $j$.
Preemption means that the processing of a job may be
interrupted at any time and can continue at any time later at no
additional cost. In the three-field scheduling notation
\cite{graham1979optimization}, this problem is denoted
as~\abc{1}{pmtn}{\sum w_jC_j}.
We also consider makespan minimization on unrelated machines,
\abc{R}{pmtn}{C_{\max}}, where we are given a set of machines~$M$,
and each job $j\in J$ has an individual processing time $p_{ij}\in
\mathbb{N}$ for running on machine $i\in M$. The objective is to
find a preemptive schedule that minimizes the makespan, that is, the
completion time of the latest job.

In this paper, we consider a generalization of these scheduling
problems within a time-of-use tariff model. We assume
that time is discretized into unit-size time slots. We are given a
tariff or cost function $e:\mathbb{N} \rightarrow \mathbb{Q}_{\geq
0}$, where~$e(t)$ denotes the \tff\ for processing job(s) at time
slot $[t,t+1)$. We assume that $e$ is piecewise constant. I.e., we
assume that the time horizon is partitioned into given intervals
$I_k =\left[s_k,d_k\right)$ with $s_k,d_k \in \mathbb{N}$,
$k=1,\ldots,K$, within which $e$ has the same value $e_k$. To ensure
feasibility, we assume that $d_K\geq \sum_{j\in J} \min_{i\in
M} p_{ij}$.

Given a schedule $\Sch$, let $y(t)$ be a binary variable indicating
if any processing is assigned to time slot~$\left[t,t+1\right)$. The
{\em utilization cost} of $\Sch$ is $E(\Sch)=\sum_{t} e(t) y(t)$.
That means, for any time unit that is used in $\Sch$ we pay the full
\tff, even if the unit is only partially used.  We also
emphasize that in the makespan problem in which we have multiple
machines, a time slot paid for can be used by all machines. This
models applications in which paying for a time unit on a resource
gives access to all units of the resource, e.g., all processors on a
server.

The overall objective is to find a schedule that minimizes the
scheduling objective, $\sum_{j\in J}w_jC_j$ resp. $C_{\max}$, {\em
plus} the utilization cost $E$. We refer to the resulting problems
as \abc{1}{pmtn}{\sum w_jC_j + E} and \abc{R}{pmtn}{C_{\max} + E}.
We emphasize that the results in this paper also hold if we minimize
any convex combination of the scheduling and utilization cost.

\subsection{Related work} 
Scheduling with time-of-use tariffs (aka variable time
slot cost) has been studied explicitly by Wan and
Qi~\cite{WanQi2010}, Kulkarni and Munagala~\cite{KulkarniM2012},
Fang et al.~\cite{FangEtAl2016} and Chen and Zhang~\cite{ChenZ18}. In their seminal paper, Wan and
Qi~\cite{WanQi2010} consider several {\em non-preemptive} single
machine problems, which are polynomial-time solvable in the
classical setting, such as minimizing the total completion time,
lateness, and total tardiness, or maximizing the weighted number of
on-time jobs. These problems are shown to be strongly NP-hard when
taking general tariffs into account, while efficient algorithms
exist for special monotone tariff functions.  In particular, the
problem \abc{1}{}{\sum C_j + E} is strongly NP-hard, and it is
efficiently solvable when the tariff function is increasing or
convex non-increasing~\cite{WanQi2010}. Practical applications,
however, often require non-monotone tariff functions, which lead to
wide open problems in the context of preemptive and non-preemptive
scheduling. In this paper, we answer complexity and approximability
questions for fundamental preemptive scheduling problems.

Kulkarni and Munagala~\cite{KulkarniM2012} focus on a relevant but
different problem in an {\em online setting}, namely online
flow-time minimization using resource augmentation. Their main
result is a scalable algorithm that obtains a constant performance
guarantee when the machine speed is increased by a constant factor
and there are only two distinct unit tariffs. They also show that,
in this online setting, for an arbitrary number of distinct unit
tariffs there is no constant speedup-factor that allows for a
constant approximate solution. For the problem considered in this
paper, offline scheduling without release dates, Kulkarni and
Munagala~\cite{KulkarniM2012} observed a relation to universal
sequencing on a machine of varying speed~\cite{Epstein_etal2012}
which implies the following results: a
pseudo-polynomial~$4$-approximation for \abc{1}{pmtn}{\sum w_jC_j +
E}, which gives an optimal solution in case that all weights are
equal, and a constant approximation in quasi-polynomial time for a
constant number of distinct tariffs or when using a machine that is
processing jobs faster by a constant factor.

Fang et al.~\cite{FangEtAl2016} study scheduling on a single machine under
time-of-use electricity tariffs. They do not take the scheduling cost
into account, but only the energy cost. It also differs from our approach
since the schedule is made by the provider and not by the user of the energy.
In their model the time horizon is divided into $K$ regions, each of which has
a cost $c_k$ per unit energy. For processing jobs the dynamic variable speed model is used; that
is, the energy consumption is $s^{\alpha}$ per unit time if jobs are
run at speed $s$, whence, within region $k$, the energy cost is
$s^{\alpha} c_k$.
The objective is to minimize energy cost such that all jobs are scheduled
within the $K$ regions.
They prove that the non-preemptive case is NP-hard and give a non-constant approximation, and for the preemptive case, they give a polynomial-time algorithm.

Chen and Zhang~\cite{ChenZ18} consider non-preemptive scheduling on a single machine so as to minimize the total \reservation\ cost under certain scheduling feasibility constraints such as a common deadline for all jobs or a bound on the maximum lateness, maximum tardiness, maximum flow-time, or sum of completion times. 
They define a \emph{valley} to be a cost interval $I_k$ that has smaller cost than its neighboring intervals and show the following.
General tariffs lead to a strongly NP-hard problem for any of the just mentioned constraints, and even very restricted tariff functions with more than one valley result in NP-hard problems that are not approximable within any constant factor.
The problem with a common deadline on the job completion times is shown to admit a pseudo-polynomial time algorithm when having two valleys, a polynomial time algorithm for tariff functions with at most one valley, and an FPTAS if there are at most two valleys and $\max_k e_k / \min_k e_k$ is bounded.  
For the other mentioned constraints, they also present polynomial time algorithms when having no more than one valley, where the problem with a bound on the sum of completion times requires the number of cost intervals, here $K$, to be fixed.

The general concept of taking into consideration additional
(time-dependent) cost for resource utilization when scheduling has
been implemented differently in other models. We mention the area of
energy-aware scheduling, where energy consumption is taken into
account (see~\cite{Albers10} for an overview). Further, the area of scheduling
with generalized non-decreasing (completion-) time dependent cost
functions, such as minimizing $\sum_j w_jf(C_j)$,
e.g.~\cite{Epstein_etal2012,MegowV2013,HoehnJ15}, or even more
general job-individual cost functions $\sum_j f_j(C_j)$,
e.g.~\cite{bansalP14,HoehnMW14,CheungS11,CheungEtAl2017} has
received quite some attention.  Our model differs fundamentally from
those models since our cost function may decrease with time. In
fact, delaying the processing in favor of cheaper time slots may
decrease the overall cost.  This is not the case in the
above-mentioned models. Thus, in our framework we have the
additional dimension in decision-making of selecting the time slots
that shall be utilized.

Finally, we point out  some similarity between our model and {\em
scheduling on a machine of varying speed}, which (with $\sum_j
w_jC_j$ as objective function) is an equivalent statement of the
problem of minimizing $\sum_j w_jf(C_j)$ on a single machine with
constant speed~\cite{Epstein_etal2012,MegowV2013,HoehnJ15}. We do
not see any mathematical reduction from one problem to the other.
However, it is noteworthy that  the independently studied problem of
scheduling with {\em non-availability periods}, see e.g. the survey
by Lee~\cite{Lee2004}, is a special case of both the varying-speed
and the time-varying tariff model. Indeed, machine non/availability
can be expressed either by $0/1$-speed or equivalently by $\infty/0$
tariff. Results shown in this context imply that our problem
\abc{1}{pmtn}{\sum w_j C_j + E} is strongly NP-hard, even if there
are only two distinct \tffs~\cite{WangSC2005}.

\subsection{Our contribution}

We present new optimal algorithms and best-possible approximation
results, unless P=NP, for the generalization of basic
scheduling problems to a framework with time-varying \tffs.

One of our results is a rather straightforward optimal
polynomial-time algorithm for the problem \abc{R}{pmtn}{C_{\max}+E}
(Section~\ref{sec:Cmax}): We design a procedure that selects the
optimal time slots to be utilized, given that we know their
optimal {\em number}. That number can be
determined by solving the scheduling problem {\em without} utilization
cost, which can be done in polynomial time by solving a
linear program~\cite{LawlerL1978}.

Whereas minimizing makespan plus utilization cost appears to be
efficiently solvable even in the most general machine model, the
objective of minimizing the total weighted completion time raises
significant complications. Our results on this objective concern
single-machine problems (Section~\ref{sec:GenericAlg}). We present
an algorithm that computes for a given ordered set of jobs an
optimal choice of time slots to be used. We derive this by first
showing structural properties of an optimal schedule, which we then
exploit together with a properly chosen potential function in a
dynamic program yielding polynomial running time. Based on this
algorithm, we show that the unweighted problem \abc{1}{pmtn}{\sum
C_{j} + E} can be solved in polynomial time and that it allows
almost directly for a fully polynomial $(4+\eps)$-approximation
algorithm for the weighted version \abc{1}{pmtn}{\sum w_j C_{j} +
E}, for which a pseudo-polynomial $4$-approximation was
observed by Kulkarni and Munagala~\cite{KulkarniM2012}. While
pseudo-polynomial time algorithms are relatively easy to
derive, it is quite remarkable that our DP's running time is
polynomial in the input, in particular, independent of $d_K$.

In Section~\ref{sec:PTAS}, we significantly improve the approximation result for the weighted problem by designing
a polynomial-time algorithm that computes for
any fixed $\eps$ a $(1+\eps)$-approximate schedule for
\abc{1}{pmtn}{\sum w_jC_j + E}, that is, we give a polynomial-time approximation
scheme (PTAS).  Unless P=NP, our algorithm is best possible, since the
problem is strongly NP-hard even if there are only two different
\tffs~\cite{WangSC2005}.

Our approach is inspired by a recent PTAS for scheduling on a
machine of varying speed~\cite{MegowV2013} and it uses some of its
properties. As discussed before, we do not see a formal mathematical
relation between these two seemingly related problems which allows
to apply the result from~\cite{MegowV2013} directly. The key is a
dual view on scheduling: instead of directly constructing a schedule
in the time-dimension, we first construct a dual scheduling solution
in the weight-dimension which has a one-to-one correspondence to a
true schedule.  We design an exponential-time dynamic programming
algorithm which can be trimmed to polynomial time using techniques
known for scheduling with varying speed~\cite{MegowV2013}.

For both the makespan and the min-sum problem, job preemption is
crucial for obtaining constant worst-case performance ratios.
For non-preemptive scheduling, a straightforward reduction from
\textsc{2-Partition} shows that no approximation is possible, unless
P$=$NP, even if there are only two different \tffs, $0$ and~$\infty$.

Finally, we remark that in general it is not clear that a schedule
can be encoded polynomially in the input. However, for our
completion-time based minimization objectives, it is easy to observe
that if an algorithm utilizes $p$ unit-size time slots in an
interval of equal cost, then it utilizes the first $p$ slots within
this interval, which simplifies the structure and the output of an
optimal solution in a crucial way.

We start below with presenting the more involved results for the
problems with scheduling objective minimizing total (weighted)
completion time. The efficient algorithm for the makespan objective
is then presented in Section~\ref{sec:Cmax}.

\section{An optimal algorithm for minimizing total completion time}
\label{sec:GenericAlg}

In this section, we show how to solve the unweighted problem
\abc{1}{pmtn}{\sum C_{j} + E} to optimality. Our main result is as follows.

\begin{theorem}\label{thm:Ptime-no-w}
  There is a polynomial-time algorithm for \abc{1}{pmtn}{\sum C_j + E}.
\end{theorem}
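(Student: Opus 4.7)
The plan is to reduce \abc{1}{pmtn}{\sum C_j+E} to the subproblem of choosing an optimal set of utilization slots \emph{given a fixed job sequence}, and then to invoke the fixed-order algorithm that will be developed in the remainder of this section. The reduction rests on a standard SPT exchange argument. Fix any subset of unit slots $T=\{t_1<t_2<\dots<t_m\}\subseteq\mathbb{N}$ with $m=\sum_j p_j$, and observe that a preemptive schedule on $T$ is fully specified by a permutation $\pi$ of the jobs: job $\pi(k)$ occupies the $k$-th block of $p_{\pi(k)}$ consecutive slots of $T$, and completes at time $t_{S(k)}+1$, where $S(k)=\sum_{j\le k}p_{\pi(j)}$. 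Since $E(\Sch)$ depends only on $T$, the only quantity $\pi$ controls is $\sum_k t_{S(k)}$.

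I would then argue that this sum is minimized by taking $\pi$ to list the jobs in non-decreasing order of $p_j$. An adjacent-swap suffices: if $p_{\pi(k)}>p_{\pi(k+1)}$, swapping these two jobs yields $S'(k)=S(k)-\bigl(p_{\pi(k)}-p_{\pi(k+1)}\bigr)<S(k)$ while leaving $S(\ell)$ unchanged for $\ell\neq k$; since the $t_\ell$ are strictly increasing in $\ell$, the objective strictly drops. It follows that there is always an optimal schedule in which the jobs are processed in SPT order, \emph{regardless of which slots are utilized}.

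Given this, I would fix the job order to SPT and apply the main algorithmic result of Section~\ref{sec:GenericAlg}, which, for any fixed sequence of jobs, computes the optimal set of utilization slots in polynomial time. Composing the two steps yields a polynomial-time optimal algorithm for \abc{1}{pmtn}{\sum C_j+E}.

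The reduction itself is routine; the real challenge—and the focus of the rest of the section—is the fixed-order algorithm. The principal obstacle there is that the time horizon $d_K$ may be exponentially large in the binary-encoded input, so a textbook DP indexed by individual time slots is merely pseudo-polynomial. To overcome this I would combine a structural analysis of optimal schedules, restricting how slot utilization can be distributed across the intervals $I_k$, with a carefully chosen potential function that summarizes the ``future completion-time contributions'' of not-yet-fixed slots, in a way that keeps the DP state polynomial in $n$ and $K$ and, crucially, independent of $d_K$.
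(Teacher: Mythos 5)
Your proposal is correct and follows essentially the same route as the paper: establish via a standard SPT exchange argument that the job order can be fixed to shortest-processing-time-first independently of the utilization decision (the paper's Observation~\ref{obs:SPT}), and then invoke the fixed-sequence slot-selection algorithm (Theorem~\ref{thm:comp-slots}) to obtain the result. The only cosmetic difference is that you spell out the adjacent-swap details that the paper leaves as a ``standard interchange argument.''
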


An algorithm for the scheduling problem with time-of-use tariffs
has to make essentially two types of decisions: (i) which time slots
to use and (ii) how to schedule the jobs in these slots. It is not
hard to see that these two decisions can be handled separately. In
fact, the following observation on the optimal sequencing of jobs
holds independently of the utilization decision and follows from a
standard interchange argument.

\begin{observation}\label{obs:SPT}
  In an optimal schedule $\mathcal{S}^*$ for the problem
  \abc{1}{pmtn}{\sum C_{j} + E}, jobs are processed according to the
  Shortest Processing Time First (SPT) rule.
\end{observation}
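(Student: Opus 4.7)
The plan is to decouple the choice of which time slots to utilize from the assignment of jobs to those slots, and then invoke a standard pairwise exchange argument on the reduced problem.

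First, I would observe that the utilization cost $E(\Sch) = \sum_t e(t) y(t)$ depends only on the set $T$ of time slots that are used by $\Sch$, and not on which job is processed in which slot. Consequently, starting from any optimal schedule $\Sch^*$ we may regard the set $T$ of slots it uses as fixed, and reason about the job-to-slot assignment separately. What remains is to minimize $\sum_{j\in J} C_j$ subject to running each job $j$ for exactly $p_j$ units inside the slots of $T$, i.e., an instance of classical preemptive single-machine scheduling with forbidden periods.

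Second, I would re-express the scheduling objective on this reduced instance. Write $T = \{t_1 < t_2 < \cdots < t_P\}$ with $P = \sum_{j\in J} p_j$, and for each job $j$ let $m_j$ denote the position in this chronological order of the last slot of $T$ assigned to $j$. Then $C_j = t_{m_j}+1$, and since $t_m$ is strictly increasing in $m$, minimizing $\sum_{j\in J} C_j$ over assignments amounts to minimizing $\sum_{j\in J} m_j$. This is exactly $1||\sum C_j$ on a virtual unit-speed machine consisting of $P$ unit-length positions, a problem classically solved by SPT.

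Third, I would finish with the standard pairwise exchange: suppose for contradiction that in $\Sch^*$ there are two jobs $j,k$ with $p_j < p_k$ such that $m_j > m_k$. Reassign the $p_j + p_k$ slots currently used by $\{j,k\}$ so that $j$ occupies the earlier $p_j$ positions (in chronological order) and $k$ occupies the later $p_k$ positions. The set $T$ (and thus $E(\Sch^*)$) is unchanged, the completion times of all other jobs are untouched, and the usual one-line computation shows that $m_j + m_k$ strictly decreases, contradicting the optimality of $\Sch^*$.

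There is essentially no technical obstacle here, and the statement of the observation itself hints that a standard interchange argument suffices. The only point worth verifying carefully is that the utilization cost is truly assignment-independent once $T$ is fixed, since this invariance is precisely what makes the exchange lossless in the $E$-term and reduces the claim to the classical SPT theorem.
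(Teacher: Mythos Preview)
Your proposal is correct and follows essentially the same approach the paper alludes to: the paper merely states that the observation ``holds independently of the utilization decision and follows from a standard interchange argument,'' and what you have written is precisely a careful instantiation of that interchange argument, first fixing the utilized slot set $T$ (so that $E$ becomes constant) and then performing the classical SPT exchange on the residual $\sum C_j$ problem.
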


Thus, in the remainder of the section we can focus on determining
which time slots to use. We design an algorithm that computes, for
any given (not necessarily optimal) scheduling sequence~$\sigma$, an
optimal \reservation\ decision for $\sigma$. In fact, we show our
structural result even for the more general problem in which jobs
have arbitrary weights.

\begin{theorem} \label{thm:comp-slots} Given an instance of
  \abc{1}{pmtn}{\sum w_j C_{j} + E} and an arbitrary processing
  sequence of jobs $\sigma$, we can compute an optimal \reservation\
  decision for~$\sigma$ in polynomial time.
\end{theorem}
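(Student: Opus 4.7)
The plan is to recast the problem as a one-dimensional allocation problem across the $K$ tariff intervals and then solve it by a dynamic program whose state space is kept polynomial via a structural lemma and a potential argument.

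First, I would observe that, for the fixed sequence $\sigma$, one may assume that in each interval $I_k = [s_k, d_k)$ of constant tariff $e_k$ the utilized slots form a prefix of that interval: swapping a used slot in $I_k$ with an earlier unused slot in $I_k$ leaves the \reservation\ cost unchanged and weakly reduces every completion time. Consequently, an \reservation\ decision is fully specified by integers $(x_1, \ldots, x_K)$ with $x_k \in \{0, \ldots, d_k - s_k\}$ and $\sum_k x_k = P := \sum_{j \in J} p_j$. Writing $\tau_k = \sum_{\ell \le k} x_\ell$ and $P_i = \sum_{\ell \le i} p_{\sigma_\ell}$, job $\sigma_i$ completes in the unique interval $k$ with $\tau_{k-1} < P_i \le \tau_k$, at time $s_k + (P_i - \tau_{k-1})$, so the full objective $\sum_j w_j C_j + E$ becomes a closed-form function of $(x_1, \ldots, x_K)$.

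The natural dynamic program takes states $(k, \tau_k)$ with transitions $(k-1, \tau_{k-1}) \to (k, \tau_{k-1} + x_k)$ for each feasible $x_k$, charging $e_k x_k$ together with $w_{\sigma_i}(s_k + P_i - \tau_{k-1})$ for every job $\sigma_i$ with $\tau_{k-1} < P_i \le \tau_k$. This solves the problem optimally in $O(K \cdot P)$ time, which is only pseudo-polynomial. To polynomialize, I would prove a structural lemma: there is an optimal solution such that, for every $k$, either interval $I_k$ is saturated ($x_k = d_k - s_k$) or $\tau_k$ equals some job boundary $P_i$. The argument is a local exchange: if $P_{i-1} < \tau_k < P_i$ and $x_k < d_k - s_k$, perturbing $x_k$ by $\pm 1$ and compensating in a later utilized interval changes the total cost linearly while $\tau_k$ stays inside the same job, so at least one of the two signs is weakly improving. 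A potential function that captures the marginal trade-off between \reservation\ savings in $I_k$ and completion-time penalties propagated into later intervals shows that this improvement is monotone, so the perturbation can be iterated until $\tau_k$ hits a job boundary or $I_k$ saturates. Under the lemma, the DP need only tabulate $\tau_k$ over a set of size $\mathrm{poly}(n, K)$ built from the $P_i$ and the interval-saturation offsets, yielding polynomial running time.

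The hard part will be the structural lemma. A single-slot perturbation of $x_k$ shifts all of $\tau_{k+1}, \tau_{k+2}, \ldots$ by the same amount and can migrate several jobs across later interval boundaries, producing a piecewise-linear effect whose breakpoints must be handled without contradicting monotonicity of the potential. The potential function must therefore encode the global downstream cost of a local change, and the exchange step must remain sign-consistent across these cascading migrations. Once the candidate set at each interval is shown to be of polynomial size, the remaining tabulation of the DP and backtracking to recover the actual \reservation\ decision are routine.
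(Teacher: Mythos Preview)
Your structural lemma, as stated, already fails on empty intervals: take one job with $p_1=3$, $w_1=1$ and intervals $I_1=[0,1)$, $I_2=[1,2)$, $I_3=[2,5)$ with tariffs $0,10,0$. The unique optimum is $x=(1,0,2)$, so $\tau_2=1$, which is neither a job boundary ($P_0=0$, $P_1=3$) nor the saturated value. More seriously, even if you restrict the claim to genuinely partial intervals ($0<x_k<|I_k|$), it does not yield a polynomial state space. Between two indices $k_1<k_2$ at which $\tau_{k_1}$ and $\tau_{k_2}$ sit at job boundaries, every intermediate interval is either full or empty, and these $2^{k_2-k_1-1}$ patterns generate up to exponentially many distinct values for the intermediate $\tau_\ell$. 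Your DP over $(k,\tau_k)$ therefore still has exponentially many reachable states, and your sketch offers no mechanism for choosing the full/empty pattern inside such a stretch. The perturbation argument, even if it survived the cascading boundary crossings you flag, would at best certify that each partial interval ends at a job boundary; it would not decide which of the remaining intervals are saturated versus empty.

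The paper closes exactly this gap with a different decomposition. It works in the time dimension, defines \emph{split points} from the $O(K)$-size set $\mathcal{P}=\bigcup_k\{s_k,s_k+1\}\cup\{d_K\}$ at which a prefix of jobs has completed, and runs a DP over pairs (job index, split point). The crucial additional ingredient is a threshold characterization (Lemma~\ref{lemma:Bounds}): inside a region between consecutive split points, a slot is utilized if and only if its tariff lies below a bound determined by a single pair $(j_x,e_{\max}^{j_x})$, and there are only $nK$ such pairs to enumerate. This threshold structure is what pins down the full/empty pattern without any subset enumeration, and it is precisely the idea your proposal is missing.
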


Combining the optimal choice of time slots
(Theorem~\ref{thm:comp-slots}) with the optimal processing order SPT
(Observation \ref{obs:SPT}) immediately implies
Theorem~\ref{thm:Ptime-no-w}.

The remainder of the section is devoted to proving
Theorem~\ref{thm:comp-slots}. Thus, we choose any (not
necessarily optimal) order of jobs,
$\sigma=(1,\ldots,n)$, in which the jobs must be processed. We want
to characterize an optimal schedule $\mathcal{S}^*$ for $\sigma$,
that is, the optimal choice of time slots for scheduling~$\sigma$.
We firstly identify structural properties of an optimal solution.
Essentially, we give a full characterization which we can compute
efficiently by dynamic programming.

More precisely, we establish a closed form that characterizes
the relationship between the \tff\ of an \reserved\ slot and job
weights in an optimal solution. This relationship allows to
decompose an optimal schedule into a series of sub-schedules. Our
algorithm will first compute all possible sub-schedules and then use
a dynamic programming approach to select and concatenate suitable
sub-schedules.

In principle, an optimal schedule may preempt jobs at fractional
time points.  However, since time slots can only be paid for
entirely, any reasonable schedule uses the \reserved\ slots
entirely as long as there are unprocessed jobs.
It can be shown by a standard interchange argument that this
is also true if we omit the requirement that time slots must be
\reserved\ entirely; for details, see~\cite{rischke-thesis}.
(We remark that for the makespan problem with multiple machines considered in
Section~\ref{sec:Cmax} this is not true.)

\begin{lemma}\label{lemma:integrality}
  Allowing to pay for utilizing partial time slots,
  there is an optimal schedule $\mathcal{S}^*$
  for \abc{1}{pmtn}{\sum w_j C_{j} + E}
  in which all \reserved\
  time slots are entirely \reserved\ and jobs are preempted only at
  integral points in time.
\end{lemma}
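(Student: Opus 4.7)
My plan is a standard exchange argument. Represent any optimal fractional schedule $\mathcal{S}^*$ (for the fixed job order $\sigma$ of Theorem~\ref{thm:comp-slots}) by its usage profile $x(t) \in [0,1]$, where the used portion of slot $[t,t+1)$ is placed at the beginning and jobs are processed back-to-back in order $\sigma$. I aim to show that $\mathcal{S}^*$ can be transformed, at no cost increase, into a schedule with $x(t) \in \{0,1\}$ for every $t$, from which integer preemption will follow directly.

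Two simple facts drive the argument. First, $\sum_t x(t) = \sum_j p_j =: P \in \mathbb{N}$, so if any slot is fractional then at least two are, since no single value in $(0,1)$ can sum to an integer. Second, for any two fractional slots $t_1 < t_2$, the earlier-shift $x(t_1) \mapsto x(t_1) + \delta$, $x(t_2) \mapsto x(t_2) - \delta$ increases the cumulative used-time function $F(\tau) = \int_0^{\tau} \mathbf{1}[\text{used at }s]\,ds$ pointwise on $[t_1, t_2 + 1)$. Since $C_j = F^{-1}(P_j)$ with $P_j := \sum_{k \le j} p_k$, each completion time weakly decreases, while the utilization cost changes by $\delta(e(t_1) - e(t_2))$; the reverse later-shift flips both sign contributions. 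Optimality of $\mathcal{S}^*$ rules out strict improvement in either direction, and combining these two inequalities forces the first-order total cost change to vanish symmetrically in both directions. Hence we can carry out the earlier-shift in small steps without changing the cost until either $x(t_1) = 1$ or $x(t_2) = 0$, strictly reducing the number of fractional slots. Iterating yields an optimal schedule with $x \in \{0,1\}^{T}$.

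Preemption integrality is then immediate: within any maximal contiguous run of used slots $[a,b)$, jobs are processed consecutively in order $\sigma$ for an integer amount of total time, and since each $p_j \in \mathbb{N}$, job switches fall at integer cumulative-work amounts, hence at integer time points; cross-block preemptions occur only at the integer slot boundaries. The main technical obstacle is to make the cost-neutral swap rigorous across points where the piecewise linear structure of $\sum_j w_j C_j(x)$ changes (i.e., some $C_j$ migrates to a neighboring slot during the swap). This is handled by re-applying the zero-marginal-cost argument at the new intermediate optimum and noting that only finitely many such structural transitions can occur before a boundary $x(t) \in \{0,1\}$ is reached.
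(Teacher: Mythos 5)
Your overall strategy (an exchange argument between two fractional slots, using optimality to argue the exchange is cost-neutral, then iterating until the usage profile is $\{0,1\}$-valued) is exactly the ``standard interchange argument'' the paper alludes to and delegates to the cited thesis, so the high-level approach is right. The integer-preemption conclusion from the integral usage profile and $p_j\in\mathbb{N}$ is also fine.

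The gap is in the step where you claim that the two optimality inequalities $\Delta_+(\delta)\ge 0$ and $\Delta_-(\delta)\ge 0$ ``force the first-order total cost change to vanish symmetrically.'' That conclusion requires $\Delta_+(\delta)=-\Delta_-(\delta)$ to first order, which is what you assert via ``the reverse later-shift flips both sign contributions.'' The utilization-cost terms do flip sign exactly, but the scheduling-cost terms need not: if some job $j$ with $C_j$ inside the affected region $(t_1,t_2+x(t_2))$ completes exactly at the right endpoint of a used interval that is followed by idle time (e.g.\ at $t_1+x(t_1)$, or at the right end of a fully used slot preceding an empty one), then under the earlier-shift $C_j$ is unchanged or moves smoothly, while under the later-shift $C_j$ jumps discontinuously to the next used slot. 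In that case the cost is not differentiable at $\delta=0$; $\Delta_-(\delta)$ has a strictly positive jump, and $\Delta_+(\delta)\ge 0$ together with $\Delta_-(\delta)\ge 0$ is perfectly compatible with a strict local minimum, so you cannot conclude $\Delta_+(\delta)=0$ and cannot shift for free. Your closing remark about re-applying the zero-marginal-cost argument at intermediate optima addresses larger-step transitions, not this failure of differentiability at $\delta=0$. The fix is to choose $t_1<t_2$ to be the two \emph{earliest} (equivalently, two consecutive) fractional slots. Then the cumulative-work values $F_0(t)$ at all used-interval right endpoints strictly between $t_1$ and $t_2+x(t_2)$ equal an integer plus $x(t_1)\in(0,1)$ (possibly plus $x(t_2)$ at the very end), hence are non-integers except possibly at $t_2+x(t_2)$ itself, where the shift is symmetric anyway; since each $P_j=\sum_{k\le j}p_k$ is an integer, no job can complete at a problematic boundary, the cost is differentiable in $\delta$ at $0$, and your symmetry argument then goes through.
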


Next, we split the optimal schedule $\mathcal{S}^*$ for the given job sequence $\sigma=(1,\ldots,n)$ into smaller sub-schedules. To that end we introduce the
concept of a \emph{split point}.
\begin{definition}[Split Point]
  Consider an optimal schedule $\mathcal{S}^*$ and the set of
  potential split points
  $\mathcal{P}:=\bigcup_{k=1}^{K} \left\{ s_k , s_k+1 \right\} \cup
  \{d_K\}$.
  Let $S_j$ and $C_j$ denote the start time and completion time of job
  $j$, respectively. We call a time point $t\in \mathcal{P}$ a split
  point for $\mathcal{S}^*$ if all jobs that start before $t$ also
  finish their processing not later than $t$, i.e., if
  $\left\{j\in J : S_j < t \right\} = \left\{ j\in J : C_j \leq
    t\right\}$.
\end{definition}

Given an optimal schedule $\mathcal{S}^*$, let
$0=\tau_1< \tau_2 < \cdots < \tau_\ell = d_K$ be the {\em maximal}
sequence of split points of $\mathcal{S}^*$, i.e. the sequence
containing all split points of $\mathcal{S}^*$.  We denote the
interval between two consecutive split points $\tau_x$ and
$\tau_{x+1}$ as {\em region}
$R_x^{\mathcal{S}^*} :=\left[\tau_x,\tau_{x+1}\right)$, for
$x=1,\ldots,\ell-1$.

Consider now any region $R_x^{\mathcal{S}^*}$ for an optimal
schedule $\mathcal{S}^*$ with $x\in \left\{1,\ldots,\ell-1\right\}$
and let $J_x^{\mathcal{S}^*} := \left\{ j \in J : S_j \in
R_x^{\mathcal{S}^*} \right\}$, the jobs that start and finish
within $R_x^{\mathcal{S}^*}$. Note that $J_x^{\mathcal{S}^*}$ might
be empty.  Among all optimal schedules we shall consider an optimal
solution $\mathcal{S}^*$ that minimizes the value
$\sum_{t=0}^{d_K-1} t\cdot y(t)$, where $y(t)$ is a binary variable
that indicates if time slot $\left[t,t+1\right)$ is \reserved\ or
not.

We observe that any job $j$ completing at the beginning of a cost
interval $I_k$, i.e. $C_j=s_k\in R_x^{\mathcal{S}^*}$ or
$C_j=s_k+1\in R_x^{\mathcal{S}^*}$, would make $s_k$ resp.\ $s_k+1$
a split point. Thus, no such job can exist.
\begin{observation}\label{obs:ComplTime}
  There is no job $j\in J_x^{\mathcal{S}^*}$ with
  $C_j \in R_x^{\mathcal{S}^*} \cap \mathcal{P}$.
\end{observation}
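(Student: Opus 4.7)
The plan is to derive a contradiction from the maximality of the split-point sequence $\tau_1<\tau_2<\cdots<\tau_\ell$ of $\mathcal{S}^*$. Suppose for contradiction that some job $j\in J_x^{\mathcal{S}^*}$ satisfies $C_j=t$ with $t\in R_x^{\mathcal{S}^*}\cap\mathcal{P}$. Because $j\in J_x^{\mathcal{S}^*}$ we have $S_j\in R_x^{\mathcal{S}^*}$, so $t=C_j>S_j\geq \tau_x$, while $t\in R_x^{\mathcal{S}^*}=[\tau_x,\tau_{x+1})$ gives $t<\tau_{x+1}$. Hence $t$ is a potential split point lying strictly between the two consecutive split points $\tau_x$ and $\tau_{x+1}$.

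The substantive step is to verify that $t$ itself satisfies the defining condition $\{j'\in J:S_{j'}<t\}=\{j'\in J:C_{j'}\leq t\}$. Any job $j'$ with $S_{j'}<\tau_x$ already completes by $\tau_x<t$, because $\tau_x$ is a split point, and so appears on both sides. For jobs with $S_{j'}\in[\tau_x,t)$ I use that $\mathcal{S}^*$ processes jobs in the prescribed order $\sigma$ preemptively: on every \reserved\ slot the machine works on the $\sigma$-earliest still unfinished job. Consequently, if $j'$ appears after $j$ in $\sigma$ it cannot be started before $j$ is done, giving $S_{j'}\geq C_j=t$; and if $j'$ appears no later than $j$ in $\sigma$ it must have completed by $C_j=t$. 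Combined with the trivial inclusion $\{C_{j'}\leq t\}\subseteq \{S_{j'}<t\}$, both sides coincide, so $t$ meets the split-point definition.

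Since $t\in\mathcal{P}$ and $\tau_x<t<\tau_{x+1}$, this contradicts $\tau_1<\cdots<\tau_\ell$ being the \emph{maximal} sequence of split points of $\mathcal{S}^*$, so no such job $j$ exists. The only step requiring some care is the argument that a $\sigma$-later job cannot have started before $t$; this follows purely from the preemptive-by-sequence convention and is independent of the tie-breaking choice that picks $\mathcal{S}^*$ as the optimal schedule minimizing $\sum_{t}t\cdot y(t)$.
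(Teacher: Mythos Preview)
Your proof is correct and follows the same approach as the paper: both argue that a completion time $C_j=t\in R_x^{\mathcal{S}^*}\cap\mathcal{P}$ would force $t$ to be a split point strictly between $\tau_x$ and $\tau_{x+1}$, contradicting maximality of the split-point sequence. The paper merely asserts this in the sentence preceding the observation, whereas you spell out why the defining equality $\{j':S_{j'}<t\}=\{j':C_{j'}\le t\}$ holds, using that $\mathcal{S}^*$ processes jobs in the fixed order~$\sigma$.
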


We say that interval $I_k$ is \emph{partially \reserved\ } if at
least one time slot in $I_k$ is \reserved, but not all.

\begin{lemma}\label{lemma:AtMostOnePI}
  There exists an optimal schedule $\mathcal{S}^*$ in which for all
  $x=1,\ldots,\ell-1$
  at most
  one interval is partially \reserved\ in $R_x^{\mathcal{S}^*}$.
\end{lemma}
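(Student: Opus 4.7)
The plan is an exchange argument. Suppose for contradiction that in $\mathcal{S}^*$ two cost intervals $I_k,I_{k'}$ with $k<k'$ are both partially utilized in the same region $R_x^{\mathcal{S}^*}$. First I would use the fact that $\mathcal{S}^*$ minimizes $\sum_{t} t\cdot y(t)$ among optimal schedules to establish a prefix property: within each cost interval $I_m$, the utilized slots must form the earliest prefix $[s_m,s_m+m_m)$, where $m_m$ is the number of utilized slots in $I_m$. Indeed, otherwise swapping a later utilized slot with an earlier unused one inside the same $I_m$ would decrease $\sum_t t\cdot y(t)$ without changing the utilization cost and without increasing the scheduling cost, contradicting the choice of $\mathcal{S}^*$.

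The key step is the swap that removes the last utilized slot of $I_{k'}\cap R_x$ at time $t_2 = s_{k'}+m_{k'}-1$ and adds a utilized slot at $t_1 = s_k+m_k$, the first unused slot of $I_k\cap R_x$. When the utilized slots are sorted by time, one entry is replaced by a smaller one, so all positions of the sorted sequence that lie between them shift to weakly earlier times, and no job's completion time grows. Hence the scheduling cost does not increase, while the utilization cost changes by $e_k-e_{k'}$. If $e_k\le e_{k'}$, this already gives a contradiction: either the total cost strictly decreases, violating optimality, or it stays the same, in which case the new schedule is optimal with strictly smaller $\sum_t t\cdot y(t)$, contradicting the choice of $\mathcal{S}^*$.

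The harder case is $e_k>e_{k'}$, where the swap above raises the utilization cost and a direct improvement does not follow. My plan is to combine it with the mirror swap that removes the last utilized slot of $I_k\cap R_x$ and adds the first unused slot of $I_{k'}\cap R_x$, and to exploit the discrete convexity of the scheduling cost along the one-parameter family obtained by sliding slots between the two intervals (which follows from the prefix structure, because each successive one-slot shift affects a shrinking window of sorted positions). Optimality of $\mathcal{S}^*$ against both one-step moves bounds the per-slot scheduling change by $\pm(e_k-e_{k'})$; convexity then forces both inequalities to be tight, so the ``earlier'' swap again produces an optimal schedule, but now with strictly smaller $\sum_t t\cdot y(t)$, once more contradicting the choice of $\mathcal{S}^*$. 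The main obstacle is precisely this second case, since no single one-step exchange is guaranteed to weakly improve and the argument must leverage convexity together with the tiebreaker. Boundary situations where $R_x$ begins or ends strictly inside $I_k$ or $I_{k'}$ (that is, $\tau_x=s_k+1$ or $\tau_{x+1}=s_{k'}+1$) reduce to the same argument, using $\tau_x,\tau_{x+1}\in\mathcal{P}$ so that the prefix structure inside each intersection is preserved.
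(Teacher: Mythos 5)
Your exchange argument is on the right track and the prefix structure you establish from the $\sum_t t\cdot y(t)$ tiebreaker is a useful (and implicitly used) fact. The easy case $e_k\le e_{k'}$ is correct. However, your treatment of the hard case $e_k> e_{k'}$ has a genuine gap, and it is precisely where the paper's proof does something different and simpler. You plan to invoke ``discrete convexity of the scheduling cost along the one-parameter family,'' but you never prove this convexity, and it is not an obvious property: when you slide one unit of utilization from $I_{k'}$ to $I_k$, the set of jobs whose completion time moves (and by how much) changes with the slide parameter, so the convexity claim would require its own argument, in particular handling gaps where some intermediate cost intervals are not utilized at all.

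The paper's proof avoids the case split and the convexity claim entirely by computing the \emph{exact} change in scheduling cost for the two opposite single-slot exchanges and observing that they are exact negatives of one another, namely $\pm\sum_{j\in J'}w_j$ with $J'=\{j: C_j\in\bigcup_{k=h+1}^{h'}I_k\}$. The justification for this exact cancellation is Observation~\ref{obs:ComplTime}: no job in the region completes at a potential split point $s_k$ or $s_k+1$, which forces every affected job to shift by exactly one unit. From $\delta_1=-\delta_2$ and optimality ($\delta_1,\delta_2\ge 0$) it follows immediately that $\delta_1=\delta_2=0$ in \emph{all} cases of the sign of $e_k-e_{k'}$, so the exchange yields another optimal schedule with a strictly smaller tiebreaker. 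Concretely: if you replace your convexity plan by the direct computation that your ``swap'' and the ``mirror swap'' change the scheduling cost by exactly $\mp\sum_{j\in J'}w_j$ (using Observation~\ref{obs:ComplTime}), both cases collapse into one and the proof closes. As written, the proposal is not a complete proof; it outlines a plausible but unverified approach where the paper gives a short, self-contained argument.
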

\begin{proof}
  By contradiction, suppose that there is more than one partially
  \reserved\ interval in $R_x^{\mathcal{S}^*}$.  Consider any two such
  intervals $I_h$ and $I_{h'}$ with $h<h'$, and all intermediate
  intervals \reserved\ entirely or not at all.  Let
  $\left[t_h,t_h+1\right)$ and $\left[t_{h'},t_{h'}+1\right)$ be the
  last \reserved\ time slot in $I_h$ and $I_{h'}$, respectively.  If we
  \reserve\ $\left[t_{h'}+1,t_{h'}+2\right)$ instead of
  $\left[t_h,t_h+1\right)$, then the difference in cost is
  $\delta_1 := e_{h'}-e_h + \sum_{j\in J'} w_j$ with
  $J':= \left\{j\in J: C_j \in \bigcup_{k=h+1}^{h'} I_k
  \right\}$ because all jobs in $J'$ are delayed by exactly one time
  unit. This is true since by Observation~\ref{obs:ComplTime}
  no job finishes at $d_k=s_{k+1}$ for  any $k$.
  If we \reserve\ $\left[t_{h}+1,t_{h}+2\right)$ instead of
  $\left[t_{h'},t_{h'}+1\right)$, then the difference in cost is
  $\delta_2 := e_{h}-e_{h'} - \sum_{j\in J'} w_j$, again using
  Observation~\ref{obs:ComplTime} to assert that no job finishes at
  $s_k+1$ for any $h+1\leq k \leq h'$.  Since $\delta_1=-\delta_2$ and
  $\mathcal{S}^*$ is an optimal schedule, it must hold that
  $\delta_1=\delta_2 = 0$.  This, however, implies that there is
  another optimal schedule with earlier used time slots which contradicts our assumption
  that $\mathcal{S}^*$ minimizes the value $\sum_{t=0}^{d_K-1} t \cdot
  y(t)$. \qed
\end{proof}

The next Lemma characterizes the time slots that are used within a region.
Let~$e_{\max}^j$ be the maximum \tff\ spent for job $j$ in~$\mathcal{S}^*$.  Furthermore,
let~$\Delta_x := \max_{j\in J_x^{\mathcal{S}^*}} (e_{\max}^j +
\sum_{j' < j} w_{j'})$
and let $j_x$ be the last job (according to sequence $\sigma$) that
achieves~$\Delta_x$.
Suppose, there are $b\geq0$ jobs before and
$a\geq 0$ jobs after job $j_x$ in~$J_x^{\mathcal{S}^*}$. The following
lemma gives for every job $j \in J_x^{\mathcal{S}^*}\setminus \{j_x\}$
an upper bound on the \tff\ spent in the interval
$\left[S_j,C_j\right)$.

\begin{lemma}\label{lemma:Bounds}
  Consider an optimal schedule $\mathcal{S}^*$ for a given job permutation $\sigma$. For any job
  $j\in J_x^{\mathcal{S}^*}\setminus \{j_x\}$ a slot
  $\left[t,t+1\right) \in \left[S_j,C_j\right)$ is \reserved\ if and
  only if the tariff $e(t)$ of $\left[t,t+1\right)$ satisfies the following upper
  bound:
  \begin{equation*}
    e(t) \ \leq\
    \begin{cases}
      e_{\max}^{j_x} + \sum_{j'=j}^{j_x -1} w_{j'}, & \forall j: j_x -b \leq j <  j_x\\[0.5ex]
     e_{\max}^{j_x} - \sum_{j'=j_x}^{j-1} w_{j'} & \forall j: j_x < j \leq j_x+a\,.
   \end{cases}
  \end{equation*}
\end{lemma}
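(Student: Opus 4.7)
The plan is to establish both directions of the biconditional by a pair of symmetric exchange arguments. Fix any $j \in J_x^{\mathcal{S}^*} \setminus \{j_x\}$ and a slot $[t^*, t^*+1) \subseteq [S_{j_x}, C_{j_x})$ attaining $e(t^*) = e_{\max}^{j_x}$; such a slot exists by definition of $e_{\max}^{j_x}$. For any candidate slot $[t, t+1) \subseteq [S_j, C_j)$ I would compare $\mathcal{S}^*$ against a perturbed schedule that swaps the roles of $[t, t+1)$ and a compensating slot near $j_x$'s processing window, then extract the stated bound from the optimality of $\mathcal{S}^*$ (together with its tie-breaking property).

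Focus on $j < j_x$; the case $j > j_x$ is handled by a mirror argument. For the forward direction (utilized implies the bound), suppose $[t, t+1)$ is in the utilized slot set $T$ of $\mathcal{S}^*$. I would construct $\mathcal{S}'$ by removing $[t, t+1)$ from $T$ and inserting a new slot $[\tilde t, \tilde t+1)$ of tariff at most $e_{\max}^{j_x}$, positioned so that in $\mathcal{S}'$ exactly the jobs $j, j+1, \ldots, j_x-1$ each complete one time unit later while $j_x, j_x+1, \ldots, j_x+a$ are unaffected. The resulting cost difference between $\mathcal{S}'$ and $\mathcal{S}^*$ is
\[
e(\tilde t) - e(t) + \sum_{j'=j}^{j_x-1} w_{j'} \;\le\; e_{\max}^{j_x} - e(t) + \sum_{j'=j}^{j_x-1} w_{j'},
\]
which must be non-negative by optimality, yielding the claimed bound. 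The reverse direction is the mirror swap: if $[t, t+1) \notin T$, insert it and drop $[t^*, t^*+1)$, advancing the same set of jobs by exactly one slot; optimality then gives $e(t) \ge e_{\max}^{j_x} + \sum_{j'=j}^{j_x-1} w_{j'}$. Equality is resolved by the side condition that $\mathcal{S}^*$ minimizes $\sum_t t \cdot y(t)$: in a cost-neutral exchange, the earlier slot $[t, t+1)$ is preferred over the later slot $[t^*, t^*+1)$, forcing the former to be utilized. The case $j > j_x$ is obtained by exchanging roles in the opposite direction; now the affected jobs are $j_x, j_x+1, \ldots, j-1$ and the analogous cost inequality produces the bound $e(t) \le e_{\max}^{j_x} - \sum_{j'=j_x}^{j-1} w_{j'}$.

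The main obstacle will be rigorously justifying that each swap produces exactly the claimed $\pm 1$ shift of completion times, for exactly the enumerated jobs, and for no others. Here the earlier structural results enter crucially: Lemma \ref{lemma:AtMostOnePI} bounds the number of partially utilized tariff intervals per region by one, Observation \ref{obs:ComplTime} forbids any job of $J_x^{\mathcal{S}^*}$ from completing at a potential split point, and the minimization of $\sum_t t \cdot y(t)$ pins down remaining ties. Together these constraints force the utilized slots inside $R_x^{\mathcal{S}^*}$ to be arranged tightly enough that a careful placement of $[\tilde t, \tilde t+1)$—filling precisely one gap in or adjacent to $j_x$'s processing window and picking up tariff $e_{\max}^{j_x}$—propagates the shift through exactly the intended jobs. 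Once the swap has been set up and its effect on completion times is certified, the bound is immediate from comparing the two schedules' costs.
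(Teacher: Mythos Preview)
Your reverse direction (insert the cheap slot $[t,t+1)$, drop the slot of tariff $e_{\max}^{j_x}$ from $j_x$'s window, and invoke optimality plus the tie-breaking rule) is exactly what the paper does. One simplification worth noting: you do not need the precise claim that \emph{exactly} jobs $j,\ldots,j_x-1$ shift. It is enough that \emph{at least} the jobs $j,\ldots,j_x-1$ (and possibly $j_x$ itself) advance by one unit, which is immediate once a slot is added before $C_j$ and removed from $[S_{j_x},C_{j_x})$; the resulting inequality is already sufficient.

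The forward direction, however, is where your plan diverges and runs into a real gap. The paper does \emph{not} use an exchange here. It simply reads the bound off the definition of $j_x$: since $j_x$ maximizes $e_{\max}^{j'} + \sum_{j''<j'} w_{j''}$ over $J_x^{\mathcal{S}^*}$, every $j<j_x$ satisfies $e_{\max}^{j} + \sum_{j'<j} w_{j'} \le e_{\max}^{j_x} + \sum_{j'<j_x} w_{j'}$, i.e.\ $e_{\max}^{j} \le e_{\max}^{j_x} + \sum_{j'=j}^{j_x-1} w_{j'}$, and any utilized slot in $[S_j,C_j)$ has tariff at most $e_{\max}^{j}$. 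For $j>j_x$ the same computation works, with strict inequality because $j_x$ is the \emph{last} maximizer. That is the entire forward argument.

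Your proposed exchange for the forward direction requires an \emph{unutilized} slot $[\tilde t,\tilde t+1)$ of tariff at most $e_{\max}^{j_x}$, placed so that $C_{j_x}$ stays fixed while $C_j,\ldots,C_{j_x-1}$ each increase by one. Nothing in Lemma~\ref{lemma:AtMostOnePI} or Observation~\ref{obs:ComplTime} guarantees such a slot exists: $j_x$'s processing window $[S_{j_x},C_{j_x})$ may contain only utilized slots together with unutilized slots of tariff strictly above $e_{\max}^{j_x}$, in which case no admissible $\tilde t$ is available. You flag this as ``the main obstacle'' but the structural lemmas you cite do not close it. Replace this half of the argument by the one-line appeal to the definition of $j_x$ and the proof goes through.
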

\begin{proof}
  Consider any job $j:=j_x -\ell$ with $0< \ell \leq b$.  Suppose
  there is a job $j$ for which a slot is \reserved\ with cost (\tff)
  $e_{\max}^{j} > e_{\max}^{j_x} + \sum_{j'=j}^{j_x -1} w_{j'}$.  Then
  $e_{\max}^{j} + \sum_{j'<j} w_{j'} > e_{\max}^{j_x} + \sum_{j'<j_x}
  w_{j'}$,
  which is a contradiction to the definition of job $j_x$.  Thus,
  $e_{\max}^{j}\leq e_{\max}^{j_x} + \sum_{j'=j}^{j_x -1} w_{j'}$.

  Now suppose that there is a slot
  $\left[t,t+1\right) \in \left[S_j,C_j\right)$ with cost
  $e(t) \leq e_{\max}^{j_x} + \sum_{j'=j}^{j_x -1} w_{j'}$ that is not
  \reserved.  There must be a slot
  $\left[t',t'+1\right) \in \left[S_{j_x},C_{j_x}\right)$ with cost
  exactly $e_{\max}^{j_x}$.  If we \reserve\ slot $\left[t,t+1\right)$
  instead of $\left[t',t'+1\right)$, then the difference in cost is
  non-positive, because the completion times of at least $\ell$ jobs
  ($j=j_x-\ell, \ldots, j_x-1$ and maybe also $j_x$) decrease by one.
  This contradicts either the optimality of $\mathcal{S}^*$ or our
  assumption that $\mathcal{S}^*$ minimizes
  $\sum_{t=0}^{d_K-1} t \cdot y(t)$.

  The proof of the statement for any job $j_x +\ell$ with
  $0< \ell \leq a$ follows a similar argument, but now using the fact
  that for every job $j:=j_x+\ell$ we have
  $e_{\max}^{j} < e_{\max}^{j_x} - \sum_{j'=j_x}^{j-1} w_{j'}$,
  because $j_x$ was the last job with
  $e_{\max}^j + \sum_{j' < j} w_{j'} =\Delta_x$. \qed
\end{proof}

\begin{corollary}\label{cor:FullUtil}
    If the interval $[S_j,C_j)$ for processing
    a job $j \in J_x^{\mathcal{S}^*}\setminus \{j_x\}$ intersects
    interval $I_k$ but job~$j$ does not complete in $I_k$, i.e.,
    $C_j>d_k$, then all time slots in $I_k$ are fully utilized.
\end{corollary}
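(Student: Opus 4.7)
The plan is to reduce the corollary to Lemma~\ref{lemma:Bounds} by establishing that the tariff~$e_k$ satisfies the same upper bound the lemma assigns to slots of $j$. Throughout, let $B(j)$ denote the right-hand side of the inequality in Lemma~\ref{lemma:Bounds}, so that a slot $[t,t+1) \subseteq [S_j,C_j)$ is utilized iff $e(t) \leq B(j)$.

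First I would dispatch the easy half: every slot of $I_k$ lying in $[S_j,C_j)$ is fully utilized iff $e_k \leq B(j)$, by Lemma~\ref{lemma:Bounds}. Moreover, if $s_k < S_j < d_k$, then the slot $[S_j,S_j+1) \subseteq I_k$ is utilized by definition of~$S_j$, so Lemma~\ref{lemma:Bounds} immediately yields $e_k \leq B(j)$. Hence the remaining case is $S_j \leq s_k$, so that $I_k \subseteq [S_j,C_j)$, and it suffices to prove $e_k \leq B(j)$ in this case.

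For this, I would argue by contradiction using an exchange, supposing $e_k > B(j)$. Since $C_j > d_k$, the last utilized slot $[C_j-1,C_j)$ of~$j$ lies in some interval $I_{k^*}$ with $k^* > k$, and Lemma~\ref{lemma:Bounds} gives $e_{k^*} \leq B(j) < e_k$. By assumption every slot of $I_k$ is then unutilized, so consider the exchange that replaces the utilized slot $[C_j-1,C_j)$ by the slot $[d_k-1,d_k) \subseteq I_k$. Because this strictly decreases the tiebreaking quantity $\sum_t t\cdot y(t)$, it suffices to show that the total cost does not strictly increase. The tariff changes by $e_k - e_{k^*}$. For the weighted completion times, using that the swap preserves the number of utilized slots and that jobs are processed in the order~$\sigma$, a careful accounting analogous to the proofs of Lemmas~\ref{lemma:AtMostOnePI} and~\ref{lemma:Bounds} shows that at least all jobs whose completion time lies in $[d_k, C_j-1]$ shift earlier, and that the resulting weighted saving telescopes into exactly the gap $B(j) - e_{k^*}$ derived from the definition $B(j) = \Delta_x - \sum_{j' < j} w_{j'}$ and the choice of~$j_x$. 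Combined with the observation that Observation~\ref{obs:ComplTime} rules out completions at the region boundaries $s_k, s_k+1, d_k$ (which would otherwise corrupt the telescoping), this yields a net cost change at most $e_k - B(j)$, hence negative in the strict inequality case and zero otherwise---either contradicting optimality of~$\mathcal{S}^*$ or contradicting the tiebreaking minimality of $\sum_t t\cdot y(t)$.

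The main obstacle I anticipate is precisely this telescoping of weighted savings. Unlike Lemma~\ref{lemma:AtMostOnePI}, where two symmetric swaps cancel via $\delta_1 = -\delta_2$, here only one direction of swap is available; consequently the bookkeeping of which jobs' last slots fall in the affected position range must be performed by hand, leaning on the structural facts that in~$\mathcal{S}^*$ jobs are processed in order~$\sigma$ and that the gap between $B(j)$ and~$e_{k^*}$ is tied to precisely the weights of jobs intervening between~$j$ and~$j_x$ in the sequence.
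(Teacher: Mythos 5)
Your exchange argument for the case $S_j \leq s_k$ is incorrect. You assume $e_k > B(j)$ and exchange the cheap slot $[C_j-1,C_j)$ (cost $e_{k^*} \leq B(j)$) for the expensive slot $[d_k-1,d_k)$ (cost $e_k$). The tariff then \emph{increases} by $e_k - e_{k^*} > 0$. Even accepting the scheduling saving of $B(j) - e_{k^*}$ that you claim (which is itself unsubstantiated: the swap only involves $j$'s last slot, so only $j$'s completion time changes, and the saving is $w_j$ times the decrease in $C_j$, not a telescoping sum up to $j_x$), the net change would be $e_k - B(j) > 0$ under the very assumption $e_k > B(j)$ --- the exchange is strictly worse, and there is nothing to contradict. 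The claim that the net change is ``negative in the strict inequality case'' has the sign reversed. In fact, the configuration you are trying to rule out --- $j$ is processed entirely before $s_k$ and after $d_k$, skipping $I_k$ because $e_k > B(j)$ --- is perfectly consistent with optimality, so no such contradiction can be manufactured.

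The corollary is intended as an immediate consequence of Lemma~\ref{lemma:Bounds}: since job $j$ is actually processed during $I_k$ (this is implicit in the hypothesis and explicit where the corollary is invoked, in the proof of Lemma~\ref{lemma:OnePartInt}, where $j$ is ``the last job processed in $I_k$''), some slot of $I_k$ inside $[S_j,C_j)$ is utilized; the ``only if'' direction then gives $e_k \leq B(j)$, and the ``if'' direction forces every slot of $I_k$ inside $[S_j,C_j)$ to be utilized. Once the hypothesis is read this way, your contradiction argument is not only wrong but also superfluous.

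A separate, smaller gap: in the case $s_k < S_j < d_k$ you establish that $I_k \cap [S_j,C_j) = [S_j,d_k)$ is utilized but say nothing about the slots $[s_k,S_j)$, which lie outside $[S_j,C_j)$ and are therefore not covered by Lemma~\ref{lemma:Bounds} applied to $j$. To obtain the stated conclusion that \emph{all} of $I_k$ is utilized, one must also account for those earlier slots, e.g.\ by applying Lemma~\ref{lemma:Bounds} to $j-1,j-2,\ldots$ (whose thresholds $B(j-1),B(j-2),\ldots$ dominate $B(j)$) and ruling out idle slots between consecutive jobs inside $I_k$ via an exchange against $j_x$'s expensive slot combined with the tiebreaking rule.
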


To decide on an optimal utilization decision for the
sub-schedule of the jobs in $R_x^{\mathcal{S}^*}$, we need the
following two lemmas.

\begin{lemma}\label{lemma:OnePartInt}
  If there is a partially \reserved\ interval $I_k$ in region
  $R_x^{\mathcal{S}^*}$, then (i) $I_k$ is the last interval of $R_x^{\mathcal{S}^*}$,
  or (ii) $j_x$ is the last job being processed in $I_k$ and $e_k=e_{\max}^{j_x}$.
\end{lemma}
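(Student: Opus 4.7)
The plan is to prove the contrapositive: assuming $I_k$ is partial and not the last interval of $R_x^{\mathcal{S}^*}$, I would show that $j_x$ must be the last job processed in $I_k$ and $e_k = e_{\max}^{j_x}$. First I would establish the structural setup: since $I_k$ is partial and the optimal schedule uses the first $p$ slots of any equal-cost interval, the utilized slots in $I_k$ are exactly $[s_k, s_k+p)$ for some $0 < p < d_k - s_k$, leaving $[s_k+p, s_k+p+1)$ unused. Because $I_k$ is not the last interval of the region, there must be a utilized slot at some time $t' \geq d_k$ within $R_x^{\mathcal{S}^*}$ -- otherwise $d_k \in \mathcal{P}$ would be a split point strictly between the consecutive split points $\tau_x$ and $\tau_{x+1}$, a contradiction. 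By Lemma~\ref{lemma:AtMostOnePI}, every other interval of the region is either fully utilized or not used at all.

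To show that $j_x$ is the last job in $I_k$, I would let $j^*$ denote the last job processed in $I_k$ and assume for contradiction that $j^* \neq j_x$. By Corollary~\ref{cor:FullUtil}, $j^*$ completes at $C_{j^*} = s_k + p$. Consider first $j^* < j_x$. The construction is an exchange analogous to the one in Lemma~\ref{lemma:Bounds}: utilize the empty slot $[s_k+p, s_k+p+1)$ and release a slot of tariff $e_{\max}^{j_x}$ from within $[S_{j_x}, C_{j_x})$. In the resulting schedule the completion times of the jobs in the range $(j^*, j_x]$ shift strictly earlier; in particular, the first affected job shifts by at least $d_k - s_k - p \geq 1$ units (the length of the introduced gap), while each subsequent job shifts by at least one unit (using Lemma~\ref{lemma:AtMostOnePI} to ensure that the intermediate intervals are fully utilized and thus the utilized slots are time-consecutive). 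Combining this completion-time saving with the tariff bound $e_k \leq e_{\max}^{j_x} + \sum_{j'=j^*}^{j_x-1} w_{j'}$ from Lemma~\ref{lemma:Bounds} applied to $j^*$ yields a non-positive cost difference; since the exchange strictly reduces $\sum_t t \cdot y(t)$, the tie-breaking property of $\mathcal{S}^*$ is contradicted. The symmetric case $j^* > j_x$ follows analogously using the decreasing direction of the bound in Lemma~\ref{lemma:Bounds}.

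It remains to argue $e_k = e_{\max}^{j_x}$. Given that $j_x$ is the last job processed in $I_k$, the slot $[s_k+p-1, s_k+p)$ lies in $[S_{j_x}, C_{j_x})$, whence $e_k \leq e_{\max}^{j_x}$. If $e_k < e_{\max}^{j_x}$ strictly, then some tariff-maximizing slot of $j_x$ lies in another (necessarily fully utilized) interval of the region. Exchanging that slot for the unused slot $[s_k+p, s_k+p+1)$ strictly decreases the tariff cost and strictly decreases $\sum_t t \cdot y(t)$, while not increasing any completion time -- again contradicting either the optimality of $\mathcal{S}^*$ or the tie-breaking choice.

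The main obstacle is the careful bookkeeping in the exchange argument: one must track how the positional shift in the time-ordered sequence of utilized slots propagates through the completion times of the intermediate jobs and combine this with the sharp tariff bounds of Lemma~\ref{lemma:Bounds} to extract an inequality strong enough for a contradiction. In particular, leveraging the tie-breaking minimization of $\sum_t t \cdot y(t)$ is essential for promoting the weak optimality relation $\Delta \geq 0$ to the strict form $\Delta > 0$ needed to close the argument.
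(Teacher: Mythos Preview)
Your exchange argument for showing that $j_x$ is the last job in $I_k$ does not close. After you correctly deduce from Corollary~\ref{cor:FullUtil} that $C_{j^*}=s_k+p$, you swap in the slot $[s_k+p,s_k+p+1)$ and swap out a slot of tariff $e_{\max}^{j_x}$ from $[S_{j_x},C_{j_x})$. The claim that the first affected job shifts by at least $d_k-s_k-p$ is wrong unless $p_{j^*+1}=1$: in general, job $j$ with $j^*<j$ moves from the $P_j$-th to the $(P_j-1)$-th utilized slot, so its completion time drops by $u_{P_j}-u_{P_j-1}$, which is only guaranteed to be $\geq 1$. Hence the total scheduling-cost saving is at most $\sum_{j'=j^*+1}^{j_x-1}w_{j'}$ (plus possibly $w_{j_x}$), whereas Lemma~\ref{lemma:Bounds} applied to $j^*$ only gives $e_k-e_{\max}^{j_x}\le \sum_{j'=j^*}^{j_x-1}w_{j'}$. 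The difference $w_{j^*}$ remains, so you cannot conclude $\Delta\le 0$.

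The paper's argument avoids this entirely. Once $C_{j^*}\le d_k$ (which you already derived), every job starting before $d_k$ also finishes by $d_k$, so $d_k=s_{k+1}\in\mathcal{P}$ is a split point; this directly contradicts the assumption that $I_k$ is not the last interval of $R_x^{\mathcal{S}^*}$. You actually state this split-point reasoning in your first paragraph (to show a utilized slot exists past $d_k$), but you fail to apply it again here, where it finishes the case in one line.

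A smaller imprecision: in your last paragraph, the exchange need not strictly decrease $\sum_t t\,y(t)$. The paper argues that, by optimality, the interval with tariff $e_{\max}^{j_x}$ must lie \emph{before} $I_k$; then swapping moves a slot forward in time, increasing $\sum_t t\,y(t)$. What saves the argument is that, since both slots lie strictly inside $[S_{j_x},C_{j_x})$ and $C_{j_x}>d_k$, no completion time changes while the tariff cost drops by $e_{\max}^{j_x}-e_k>0$, contradicting optimality directly (not the tie-break).
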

\begin{proof}
Suppose there exists a partially \reserved\ interval $I_k$ in region
$R_x^{\mathcal{S}^*}$. Suppose $j$ with $j \neq {j_x}$ is the
last job that is processed in $I_k$, hence (ii) does not hold.
Then either $C_j < d_k$, in which case $d_k=s_{k+1}$
is a split point and thus $I_k$ is the last interval in the region,
whence (i) is true. Or, we are in the situation of
Corollary~\ref{cor:FullUtil} and have a contradiction,
because then $I_k$ must be fully utilized.

Now suppose $j_x$ is the last job being processed in $I_k$. If
$C_{j_x} < d_k$, then again $I_k$ is the last interval in the
region. Otherwise $C_{j_x} \notin I_k$. If $e_k =
e_{\max}^{j_x}$, then case $(ii)$ of the lemma holds. If not,
by definition of
$e_{\max}^{j_x}$ we have $e_k < e_{\max}^{j_x}$. By
optimality of $\mathcal{S}^*$, interval $I_k$ comes after the last
utilized ``expensive'' interval with cost $e_{\max}^{j_x}$. Hence,
job $j_x$ is processed in an expensive interval, then in $I_k$
and is completed in yet another interval. But then we can utilize an
extra time slot in $I_k$ instead of a time slot in the expensive
interval, without increasing the completion time. This contradicts
optimality, and, hence, $e_k = e_{\max}^{j_x}$, which completes the
proof. \qed
\end{proof}

\begin{lemma}\label{lemma:NoEmpty}
  There exists an optimal schedule $\mathcal{S}^*$ for a given job
  permutation $\sigma$ with the following property. If the last interval
  $I_k$ of a region $R_x^{\mathcal{S}^*}\!$ is only partially \reserved\,
  then all time slots in $\left[S_{j_x},C_{j_x}\right)$ with cost at most
  $e_{\max}^{j_x}$ are utilized.
\end{lemma}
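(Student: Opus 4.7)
The plan is to take an optimal schedule $\mathcal{S}^*$ that additionally minimizes the tiebreaker $\sum_{t} t\cdot y(t)$ over optimal schedules and to show by an exchange argument that $\mathcal{S}^*$ itself satisfies the claimed property. Assume for contradiction that $I_k$ is partially utilized, that $[T,T+1)$ is its last utilized slot, and that some slot $[t,t+1)\subseteq[S_{j_x},C_{j_x})$ is unutilized with $e(t)\le e_{\max}^{j_x}$.

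First I would rule out $t\in I_k$. Since $I_k$ is partially utilized, its utilized slots form a prefix, so any unutilized slot inside $I_k$ lies strictly after $T$. If $t\in I_k$ were unutilized, the chain $C_{j_x}-1\ge t\ge T+1$ combined with $[C_{j_x}-1,C_{j_x})\subseteq I_k$ (which is a utilized slot of $j_x$, since $I_k$ is the last interval of $R_x^{\mathcal{S}^*}$) would violate the maximality of $T$ inside $I_k$. Hence $t$ lies in some earlier interval $I_{k'}\subseteq R_x^{\mathcal{S}^*}$, and Lemma~\ref{lemma:AtMostOnePI} forces $I_{k'}$ to be entirely unutilized; in particular $t<s_k\le T$.

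Next I would analyse the exchange $\mathcal{S}'$ obtained from $\mathcal{S}^*$ by utilizing $[t,t+1)$ and de-utilizing $[T,T+1)$. A position-shift computation, in the same spirit as in the proof of Lemma~\ref{lemma:Bounds}, shows that exactly the jobs $j_x,j_x+1,\dots,j_{last}$ are affected (where $j_{last}$ is the last job of $J_x^{\mathcal{S}^*}$) and that each of their completion times drops by at least one time unit, so the scheduling cost satisfies $\Delta_{\mathrm{sched}}\le -\sum_{j=j_x}^{j_{last}} w_j$. If $e(t)\le e_k$, the total cost change $(e(t)-e_k)+\Delta_{\mathrm{sched}}$ is non-positive and $\sum_{t} t\cdot y(t)$ strictly decreases because $t<T$, contradicting the tiebreaker choice of $\mathcal{S}^*$. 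If instead $e(t)>e_k$, then necessarily $j_{last}>j_x$, and Lemma~\ref{lemma:Bounds} applied to the utilized slot $[T,T+1)\subseteq[S_{j_{last}},C_{j_{last}})$ yields
\[
  e_k \;=\; e(T)\;\le\; e_{\max}^{j_x}-\sum_{j'=j_x}^{j_{last}-1}w_{j'};
\]
combining this with $e(t)\le e_{\max}^{j_x}$ and the refined per-job estimate $\Delta C_j\ge 1$ drives the total cost change to be non-positive while $\sum_{t} t\cdot y(t)$ again strictly decreases, yielding the same contradiction.

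The main obstacle is the second case: the tariff increase $e(t)-e_k$ must be fully covered by the aggregate weighted completion-time savings across $j_x,\dots,j_{last}$. Pinning this down requires combining the bound from Lemma~\ref{lemma:Bounds} (and in particular the strict inequality that holds because $j_x$ is the \emph{last} job attaining $\Delta_x$) with the per-job lower bound $\Delta C_j\ge 1$ coming from integrality of time and the prefix convention. If this bookkeeping leaves any residual tariff excess, the argument can be closed by pairing the exchange above with the dual exchange that utilizes the first vacant slot $[T+1,T+2)\subseteq I_k$ (available because $I_k$ is only partially utilized) and de-utilizes a utilized slot of cost $e_{\max}^{j_x}$ inside $[S_{j_x},C_{j_x})$; enforcing optimality of $\mathcal{S}^*$ against both exchanges simultaneously forces either a strict improvement or an equality case that is again broken by the tiebreaker.
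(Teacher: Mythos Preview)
Your proposal eventually lands on the right argument—the dual exchange you sketch in the last paragraph is exactly what the paper does—but the route you take before that has a genuine gap.

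The claim that $e(t)>e_k$ ``necessarily'' forces $j_{last}>j_x$ is unjustified and can fail. Take $a=0$, so $j_x$ is the last job of the region and completes in $I_k$. Nothing prevents $[S_{j_x},C_{j_x})$ from meeting an entirely unutilized earlier interval $I_{k'}$ with $e_k<e_{k'}\le e_{\max}^{j_x}$; then $t\in I_{k'}$ satisfies all your hypotheses while $j_{last}=j_x$. In that case Lemma~\ref{lemma:Bounds} is unavailable (it explicitly excludes $j_x$), and your single exchange only yields a cost change of at most $e(t)-e_k-w_{j_x}$, which need not be $\le 0$. So the second branch of your case split does not close without the fallback.

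The paper sidesteps all of this by going straight to the paired exchange. With $\delta_1$ the cost change when utilizing $[t,t+1)$ instead of the last utilized slot of $I_k$, and $\delta_2$ the cost change when utilizing the first vacant slot of $I_k$ instead of a slot of cost $e_{\max}^{j_x}$ in $[S_{j_x},C_{j_x})$, optimality gives $\delta_1\ge 0$ and $\delta_2\ge 0$; the scheduling contributions cancel (each of $j_x,\dots,j_x+a$ moves by exactly one unit, using Observation~\ref{obs:ComplTime}), so $\delta_1+\delta_2=e(t)-e_{\max}^{j_x}\ge 0$. This contradicts $e(t)\le e_{\max}^{j_x}$ unless equality holds, in which case $\delta_1=-\delta_2=0$ and the first exchange produces another optimum with strictly smaller $\sum_t t\,y(t)$. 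This single argument covers both of your cases at once, with no appeal to Lemma~\ref{lemma:Bounds}, no need to separate $j_{last}=j_x$ from $j_{last}>j_x$, and no preliminary argument that $t\notin I_k$. Your final paragraph already contains this idea; drop the detour and lead with it.
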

\begin{proof}
Recall that $j_x + a$ is the last job being processed in the region, and hence, it is the last job processed in the partially \reserved\ interval $I_k$.

  Suppose there is a time slot
  $\left[t,t+1\right) \in \left[S_{j_x},C_{j_x}\right)$ with cost at
  most $e_{\max}^{j_x}$ that is not \reserved.  If we \reserve\   $\left[t,t+1\right)$ instead of the last \reserved\ slot in $I_k$,
  then the difference in cost is
  $\delta_1:=e(t) -e_k - \sum_{j=j_x}^{j_x+a} w_{j}$.
  On the other hand, if we
  \reserve\ one additional time slot in $I_k$ instead of a time slot
  in $\left[S_{j_x},C_{j_x}\right)$ with cost $e_{\max}^{j_x}$, then the difference in cost is
  $\delta_2 := e_k - e_{\max}^{j_x} + \sum_{j=j_x}^{j_x+a} w_{j}$.
  We consider an optimal schedule $\mathcal{S}^*$, thus
  $\delta_1 \geq 0$ and $\delta_2 \geq 0$ which implies that
  $\delta_1 + \delta_2 = e(t) - e_{\max}^{j_x} \geq 0$.  This is a
  contradiction if $e(t) < e_{\max}^{j_x}$.  If
  $e(t) = e_{\max}^{j_x}$, then $\delta_1 = -\delta_2 = 0$, because we
  consider an optimal schedule $\mathcal{S}^*$.  This, however,
  contradicts our assumption that $\mathcal{S}^*$ minimizes the value
  $\sum_{t=0}^{d_K-1} t \cdot y(t)$. \qed
\end{proof}

We now show how to construct an optimal partial schedule for a given
ordered job set in a given region in polynomial time.

\begin{lemma}\label{lem:partial-schedule}
  Given a region $R_x$ and an ordered job set $J_x$, we can find
  in polynomial time an optimal \reservation\ decision for
  scheduling
  $J_x$ within the region
  $R_x$, which does not contain any other split points than $\tau_x$
  and $\tau_{x+1}$, the boundaries of $R_x$.
\end{lemma}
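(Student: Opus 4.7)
The idea is to enumerate a polynomial-sized family of candidate schedules that is guaranteed, by the structural lemmas above, to contain the optimum for the region. The parameters to guess are (i) the pivot job $j_x \in J_x$ and (ii) the value of $e_{\max}^{j_x}$, which must equal $e_k$ for some cost interval $I_k$ intersecting $R_x$; by Lemma~\ref{lemma:OnePartInt} I also guess whether the (at most one) partially utilized interval is the last interval of $R_x$ or the interval in which $j_x$ completes. In total this yields $O(|J_x| \cdot K)$ guesses.

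For a fixed guess, Lemma~\ref{lemma:Bounds} prescribes an explicit tariff threshold $\theta_j$ for every $j \in J_x\setminus\{j_x\}$ with the property that a slot $[t,t+1) \in [S_j,C_j)$ is utilized in the optimum iff $e(t) \leq \theta_j$; for $j_x$ the analogous threshold is~$e_{\max}^{j_x}$. Combined with Corollary~\ref{cor:FullUtil}, every cost interval through which some job $j \neq j_x$ merely passes is either fully utilized (iff $e_k \leq \theta_j$) or left completely empty, and the single partial interval's utilization is in turn fixed either by Lemma~\ref{lemma:NoEmpty} (if it is the last interval of $R_x$) or by the location of $C_{j_x}$ (if $j_x$ completes in it). This reduces each guess to one concrete candidate schedule, which I reconstruct by scanning $R_x$ from $\tau_x$ to $\tau_{x+1}$, processing jobs in the order of $\sigma$, and, for the current job~$j$, utilizing the leftmost remaining eligible slots (those with $e(t)\leq\theta_j$) until its $p_j$ units are placed. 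The algorithm verifies that (a) all $\sum_{j\in J_x} p_j$ units fit inside $R_x$, (b) no time in $\mathcal{P}\cap(\tau_x,\tau_{x+1})$ becomes a split point, and (c) the value $e_{\max}^{j_x}$ is actually attained in $[S_{j_x},C_{j_x})$, and finally returns the cheapest feasible candidate.

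The main obstacle is to argue that these rigid rules really recover the true optimum from some enumerated guess. This follows from assembling Lemmas~\ref{lemma:AtMostOnePI}--\ref{lemma:NoEmpty}: Lemma~\ref{lemma:AtMostOnePI} restricts the optimum to at most one partially utilized interval in $R_x$, Lemma~\ref{lemma:Bounds} together with Corollary~\ref{cor:FullUtil} pins down every non-partial interval, and Lemmas~\ref{lemma:OnePartInt}--\ref{lemma:NoEmpty} control the partial one. A secondary obstacle is ensuring that the left-to-right reconstruction does not accidentally turn an intermediate point of $\mathcal{P}$ into a split point; since the true optimum avoids this by Observation~\ref{obs:ComplTime}, it suffices to discard any reconstructed candidate that violates~(b) during verification. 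With polynomial work per guess, this completes the plan.
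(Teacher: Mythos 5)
Your overall framework (enumerate $(j_x,e_{\max}^{j_x})$, reconstruct a single candidate per guess by threshold rules, verify, take the best) matches the paper's outline, but there is a genuine gap in how you reconstruct the slots for $j_x$ itself, and it makes the algorithm fail on concrete instances.

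You assert that for $j_x$ ``the analogous threshold is $e_{\max}^{j_x}$,'' i.e.\ that a slot $[t,t+1)\subseteq[S_{j_x},C_{j_x})$ is utilized iff $e(t)\le e_{\max}^{j_x}$, and then you fill $j_x$'s slots left-to-right under that rule. But none of the cited lemmas establishes this: Lemma~\ref{lemma:Bounds} explicitly excludes $j_x$, and Lemma~\ref{lemma:NoEmpty} covers only the situation where the partially utilized interval is the \emph{last} interval of $R_x$. In case~(ii) of Lemma~\ref{lemma:OnePartInt} the threshold rule for $j_x$ is simply false: there the partial interval $I_k$ satisfies $e_k=e_{\max}^{j_x}$ and $C_{j_x}>d_k$ (note that in case~(ii) $j_x$ does \emph{not} complete in $I_k$, contrary to what you write), so $j_x$ may well skip slots of $I_k$ costing exactly $e_{\max}^{j_x}$ in favour of cheaper later slots. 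A minimal example: a single job $j_x$ with $p_{j_x}=4$, $w_{j_x}=1$, region $[0,8)$, $I_k=[0,5)$ with tariff $10$, $I_{k+1}=[5,8)$ with tariff $1$. The optimum utilizes $\{0,5,6,7\}$ (cost $13+8=21$, $e_{\max}^{j_x}=10$), but your left-to-right rule with threshold $10$ reconstructs $\{0,1,2,3\}$, which your check~(b) correctly rejects (time $5$ becomes a split point); no other guess of $(j_x,e_{\max}^{j_x})$ yields the optimum either, so the procedure returns no valid candidate for this region even though one exists.

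This is exactly why the paper splits the reconstruction into two cases. When the partial interval is the last one of $R_x$, your forward threshold rule for $j_x$ is justified by Lemma~\ref{lemma:NoEmpty}, and everything else is handled by Lemma~\ref{lemma:Bounds} as you describe. But when there is no partial interval, or it is the one covered by case~(ii) of Lemma~\ref{lemma:OnePartInt}, the paper instead schedules $j_x+1,\dots,j_x+a$ \emph{backwards} from $\tau_{x+1}$ (valid because then the last interval of $R_x$ is fully utilized, forcing $C_{j_x+a}=\tau_{x+1}$), obtains a fixed window $[C_{j_x-1},S_{j_x+1})$ for $j_x$, and solves a single-job makespan-plus-utilization-cost problem in that window via Theorem~\ref{thm:SingleMachMakespan}. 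This subroutine is precisely what replaces the unjustified threshold rule for $j_x$; your proposal is missing it.
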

\begin{proof}
  Given $R_x$ and $J_x$, we guess the optimal
  combination~$\left(j_x,e_{\max}^{j_x}\right)$, i.e., we enumerate
  over all $nK$ combinations and choose eventually the best solution.

  We firstly assume that a partially \reserved\ interval exists and it
  is the last one in $R_x$ (case (i) in
  Lemma~\ref{lemma:OnePartInt}).  Based on the characterization in
  Lemma~\ref{lemma:Bounds} we find in polynomial time the slots to be
  \reserved\ for the jobs $j_x-b,\ldots,j_x-1$.  This defines
  $C_{j_x-b},\ldots,C_{j_x-1}$. Then starting job $j_x$ at time
  $C_{j_x-1}$, we check intervals in the order given and \reserve\ as
  much as needed of each next interval $I_h$ if and only if
  $e_h\leq e_{\max}^{j_x}$, until a total of $p_{j_x}$ time slots have
  been \reserved\ for processing~$j_x$. Lemma~\ref{lemma:NoEmpty}
  justifies to do that.  This yields a completion time $C_{j_x}$.
  Starting at $C_{j_x}$, we use again Lemma~\ref{lemma:Bounds} to find
  in polynomial time the slots to be \reserved\ for processing the jobs
  $j_x+1,\ldots,j_x+a$.  This gives $C_{j_x+1},\ldots,C_{j_x+a}$.

  Now we assume that there is no partially \reserved\ interval or we are
  in case~(ii) of Lemma~\ref{lemma:OnePartInt}.  Similar to the case
  above, we find in polynomial time the slots that $\mathcal{S}^*$
  \reserves\ for the jobs $j_x-b,\ldots,j_x-1$ based on
  Lemma~\ref{lemma:Bounds}.  This defines
  $C_{j_x-b},\ldots,C_{j_x-1}$.  To find the slots to be \reserved\ for
  the jobs $j_x+1,\ldots,j_x+a$, in this case, we start at the end of
  $R_x$ and go backwards in time. We can start at the end of $R_x$
  because in this case the last interval of $R_x$ is fully \reserved.
  This gives $C_{j_x+1},\ldots,C_{j_x+a}$.  Job $j_x$ is thus to be
  scheduled in $\left[C_{j_x-1}, S_{j_x+1}\right)$.  In order to find
  the right slots for $j_x$ we solve a makespan problem in the
  interval $\left[C_{j_x-1}, S_{j_x+1}\right)$, which can be done in
  polynomial time (Theorem~\ref{thm:SingleMachMakespan}) and gives a
  solution that cannot be worse than what an optimal schedule
  $\mathcal{S}^*$ does.

  If anywhere in both cases the \reserved\ intervals can not be made
  sufficient for processing the job(s) for which they are intended, or
  if scheduling the jobs in the \reserved\ intervals creates any
  intermediate split point, then this
  $\left(j_x,e_{\max}^{j_x}\right)$-combination is rejected.  Hence,
  we have computed the optimal schedules over all $nK$ combinations of
  $\left(j_x,e_{\max}^{j_x}\right)$ and over both cases of
  Lemma~\ref{lemma:OnePartInt} concerning the position of the
  partially \reserved\ interval. We choose the schedule with minimum
  total cost and return it with its value.  This completes the proof.
  \qed
\end{proof}

Now we are ready to prove our main theorem.
\begin{proof}[Proof of Theorem~\ref{thm:comp-slots}]
  We give a dynamic program. Assume jobs are indexed according to the order given by $\sigma$.  We define a state $(j,t)$, where $t$ is a potential split point $t\in \mathcal{P}$ and $j$ is a job from the job set $J$, and a dummy job $0$. The value of a state, $Z(j,t)$,  is the optimal scheduling cost plus \reservation\ cost for completing jobs $1,\ldots, j$ by time $t$. We apply the following recursion:
  \begin{align*}
    Z(j, t) &= \min\!\bigg\{ Z(j'\!,t') + z\big(\big\{j'\!+1,\ldots,j\big\},
    [t',t)\big) \, | \,  t',t\in \mathcal{P}, t'<  t,  j',j \in J,  j'\leq j \bigg\},\\
    Z(0, t) &= 0,\quad \textup{ for any } t,\\
    Z(j, s_1) &= \infty ,\quad \textup{ for any } j>0,
  \end{align*}
  where $z\big(\big\{j'+1,\ldots,j\big\},[t',t)\big)$ denotes the
  value of an optimal partial schedule for job set
  $\{j'+1,j'+2,\ldots,j\big\}$ in the region $[t',t)$, or $\infty$ if
  no such schedule~exists. In case $j=j'$ there is no job to be scheduled in the interval $[t',t)$, whence we set $z\big(\big\{j'+1,\ldots,j\big\},[t',t)\big)=0$. This models the option of leaving regions empty.

  An optimal partial schedule can be computed in polynomial time as we have shown in Lemma~\ref{lem:partial-schedule}.   Hence, we compute $Z(j, t)$ for all $O(nK)$ states in polynomial
  time, which concludes the~proof.  \qed
\end{proof}

\subsubsection*{Remark: A simple $(4+\epsilon)$-approximation for the weighted problem.}

{\noindent It is worth mentioning that the characterization of
an optimal \reservation\ decision
above~(Theorem~\ref{thm:comp-slots}) can
be used to obtain a simple $(4+\eps)$-approximation for the {\em
weighted} problem \abc{1}{pmtn}{\sum w_j C_{j} + E}.

For the weighted problem, there may not exist a job sequence that is
universally optimal for {\em all} \reservation\
decisions~\cite{Epstein_etal2012}.  However, in the context of
scheduling on an unreliable machine there has been shown a
polynomial-time algorithm that computes a
universal~$(4+\eps)$-approximation~\cite{Epstein_etal2012}. More
precisely, the algorithm constructs a sequence of jobs which
approximates the scheduling cost for any \reservation\ decision
within a factor at most $4+\eps$.

Consider an instance of problem \abc{1}{pmtn}{\sum w_j C_{j} + E} and
compute such a universally $(4+\eps)$-approximate sequence $\sigma$. Applying
Theorem~\ref{thm:comp-slots} to $\sigma$, we obtain a schedule
$\mathcal{S}$ with an optimal \reservation\ decision for $\sigma$. Let
$\mathcal{S'}$ denote the schedule which we obtain by changing the
\reservation\ decision of $\mathcal{S}$ to the \reservation\ in an optimal
schedule $\mathcal{S}^*$ (but keeping the scheduling sequence
$\sigma$). The schedule $\mathcal{S'}$ has cost no less than the
original cost of $\mathcal{S}$. Furthermore, given the \reservation\ decision in the optimal solution $\mathcal{S}^*$, the sequence
$\sigma$ approximates the scheduling cost of $\mathcal{S}^*$ within a
factor of $4+\eps$.  This gives the following result.

\medskip
\begin{corollary}
  There is a $(4+\eps)$-approximation algorithm for \abc{1}{pmtn}{\sum w_j
    C_j + E}.
\end{corollary}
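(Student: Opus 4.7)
The plan is to combine Theorem~\ref{thm:comp-slots} with a black-box universal sequencing result from~\cite{Epstein_etal2012}. First I would invoke that result to compute in polynomial time a single job permutation $\sigma$ with the following \emph{universal} property: for every possible set of processing slots (equivalently, every availability pattern of a machine of varying speed), scheduling the jobs in the order $\sigma$ on those slots yields a sum of weighted completion times at most $(4+\eps)$ times the minimum achievable by any schedule restricted to the same slot set.

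Next I would apply Theorem~\ref{thm:comp-slots} to $\sigma$ to obtain, in polynomial time, a schedule $\mathcal{S}$ whose \reservation\ decision is optimal among all schedules that process the jobs in the order $\sigma$. This $\mathcal{S}$ is the output of the algorithm.

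To bound the ratio, let $\mathcal{S}^*$ be an optimal schedule and let $\mathcal{S}'$ be the hypothetical schedule obtained by \reserving\ exactly the slots used by $\mathcal{S}^*$ but processing the jobs in the order $\sigma$. By Theorem~\ref{thm:comp-slots}, $\mathcal{S}$ has the cheapest total cost among $\sigma$-ordered schedules, so $\text{cost}(\mathcal{S}) \leq \text{cost}(\mathcal{S}')$. Since $\mathcal{S}'$ and $\mathcal{S}^*$ \reserve\ the same slots their \reservation\ costs coincide, and by universality of $\sigma$ the weighted-completion-time part of $\mathcal{S}'$ is at most $(4+\eps)$ times that of $\mathcal{S}^*$. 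Adding the two contributions and using $(4+\eps)\,E(\mathcal{S}^*) \geq E(\mathcal{S}^*)$ yields $\text{cost}(\mathcal{S}) \leq (4+\eps)\,\text{cost}(\mathcal{S}^*)$.

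The only subtle point I would double-check is that the universal guarantee of~\cite{Epstein_etal2012}, stated in a varying-speed / unreliable-machine model, transfers verbatim to our setting: once the \reservation\ decision in $\mathcal{S}^*$ is frozen, our instance restricted to that slot set is exactly a varying-speed instance (speed $1$ on \reserved\ slots, speed $0$ elsewhere), so the universal sequence applies off the shelf, and crucially the \reservation\ cost cancels identically on both sides of the comparison. No other obstacle arises; running the three steps in sequence gives the claimed polynomial-time $(4+\eps)$-approximation.
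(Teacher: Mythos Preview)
Your proposal is correct and follows essentially the same argument as the paper: compute the universal $(4+\eps)$-approximate sequence from~\cite{Epstein_etal2012}, apply Theorem~\ref{thm:comp-slots} to it, and compare the result to the hybrid schedule $\mathcal{S}'$ that uses the optimal \reservation\ with the sequence~$\sigma$. The paper's proof is slightly terser, but the logic, the intermediate schedule $\mathcal{S}'$, and the final inequality chain are identical to yours.
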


This result is superseded by the PTAS presented in the next section.

\section{A PTAS for minimizing the total weighted completion time}
\label{sec:PTAS}

The main result of this section is a polynomial time approximation scheme for
minimizing the total weighted completion time with time-varying
\reservation\ cost.

\begin{theorem}\label{thm:PTAS}
  There is a polynomial-time approximation scheme for
  \abc{1}{pmtn}{\sum w_j C_j + E}.
\end{theorem}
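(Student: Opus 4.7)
The plan is to follow the dual-scheduling PTAS strategy of \cite{MegowV2013} for a varying-speed machine and adapt it to incorporate the slot-purchase decisions that distinguish our problem. The proof proceeds in three stages: (i) geometric rounding, (ii) a dual-view dynamic program, and (iii) trimming the DP to polynomial size.

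In the first stage, round weights, processing times and \tffs\ to powers of $1+\eps$ at a multiplicative loss of $1+O(\eps)$. A standard bucketing argument, discarding negligibly-contributing weight classes and truncating extreme values, leaves only $O(\log n/\eps)$ distinct weight classes, and analogously a polylogarithmic number of tariff levels. This both reduces combinatorial complexity and makes it meaningful to aggregate jobs by class rather than track them individually.

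In the second stage, rewrite the objective as
\[
\sum_j w_j C_j + E \;=\; \sum_t \bigl( W(t) + e(t)\, y(t) \bigr),
\]
where $W(t)=\sum_{j: C_j > t} w_j$ is the remaining weight at time $t$ and $y(t)$ indicates whether slot $t$ is \reserved. Rather than iterating over time to pick a job, construct the schedule in the \emph{weight} dimension: for each rounded weight class decide when its jobs complete. After rounding, the relevant state between two consecutive split points is fully described by the vector of pending jobs per class, rather than their individual identities. The structural results of Section~\ref{sec:GenericAlg} (in particular Lemma~\ref{lemma:AtMostOnePI}, Lemma~\ref{lemma:Bounds} and Theorem~\ref{thm:comp-slots}) certify that, once the per-region class profile and the region endpoints are fixed, an optimal \reservation\ decision inside that region can be computed in polynomial time. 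Concatenating across regions yields a DP whose state is a split point together with a pending-jobs-per-class vector.

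The third stage trims the exponential state space to polynomial size by merging states whose vectors agree up to a suitable multiplicative tolerance, exactly as in \cite{MegowV2013}; the potential-function argument from that paper bounds the propagated error by a $1+O(\eps)$ factor. The hard part will be handling the additional ``which slots to pay for'' dimension that is absent in the varying-speed setting: we need that two DP states agreeing in their weight vectors but differing slightly in accumulated \reservation\ cost remain interchangeable up to a $(1+\eps)$ factor. The key enabling observation is that, once the job sequence and the split-point structure are fixed, Theorem~\ref{thm:comp-slots} already yields the optimal \reservation\ decision in polynomial time; hence the DP only has to commit to the sequencing information per region and the trimming can focus on the weight-vector coordinate, letting the utilization cost be resolved by the algorithm of the previous section inside each region.
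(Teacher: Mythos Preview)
Your proposal has a genuine gap in how it handles the reservation-cost dimension. Theorem~\ref{thm:comp-slots} and the split-point lemmas of Section~\ref{sec:GenericAlg} are statements about an optimal schedule \emph{for a given sequence $\sigma$}; the split points, the regions $R_x$, and the quantities $j_x,e_{\max}^{j_x}$ in Lemma~\ref{lemma:Bounds} are all defined relative to that fixed sequence. In the weighted problem the sequence is precisely what is unknown, so you cannot ``let the utilization cost be resolved by the algorithm of the previous section inside each region'' without already having committed to an ordering of the jobs assigned to that region. A per-region \emph{class profile} (counts per weight class) is not enough input for Theorem~\ref{thm:comp-slots}: its subroutine enumerates over job indices and sums individual weights $w_{j'}$ along the sequence. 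Moreover, your regions are time-based split points of an optimal schedule you do not yet know, whereas the trimming machinery of \cite{MegowV2013} that you invoke is built for the weight-interval decomposition $W\!I_u$; the two decompositions are not interchangeable, and it is not clear how the localization and light-job arguments would bound the number of relevant class-profile vectors at a split point.

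The paper does \emph{not} reuse Section~\ref{sec:GenericAlg} at all for the PTAS. Its key additional idea, which your outline explicitly discards, is to carry the reservation cost as a separate, geometrically discretized budget coordinate $b$ in the DP state (together with a discretized average scheduling cost $avg$), so that a state is $[J_u,b,avg]$. The transition from $[J_{u+1},b_1,avg_1]$ to $[J_u,b_2,avg_2]$ then reduces to finding the earliest time by which the jobs $J_{u+1}\setminus J_u$ can be placed using at most $b_2-b_1$ worth of the cheapest available slots; this is a simple monotone subproblem solvable by binary search, and it requires no job ordering inside the weight interval because all those jobs are charged the common completion weight $(1+\eps)^{u+1}$. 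Only the $J_u$ coordinate needs the \cite{MegowV2013} trimming; $b$ and $avg$ are already polynomial after discretization. Your plan would need to replace the appeal to Theorem~\ref{thm:comp-slots} by exactly such a budget dimension to go through.
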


In the remainder of this section we describe some preliminaries,
present a dynamic programming~(DP) algorithm with exponential running
time, and then we argue that the running time can be reduced to
polynomial time.  As noted in the introduction, our approach is
inspired by a PTAS for scheduling on a machine of varying
speed~\cite{MegowV2013}, but a direct application does not seem
possible.

\subsection{Preliminaries and scheduling in the weight-dimension}
\label{sec:prelim-ptas}

We describe a schedule $\mathcal{S}$ not in terms of completion times
$C_j\!\left( \mathcal{S}\right)$, but in terms of the remaining weight
function $W^{\mathcal{S}}\!\left( t\right)$ which, for a given
schedule $\Sch$, is defined as the total weight of all jobs not
completed by time~$t$. Notice that, by definition, $W^{\mathcal{S}}(t)$ is right-continuous.
Based on the remaining weight function we can
express the cost for any schedule $\mathcal{S}$ as
$$\int_0^\infty W^{\mathcal{S}}\!\left( t\right) \mathrm{d}t = \sum_{j\in J} w_j
C_j\!\left( \mathcal{S}\right).$$
This has a natural interpretation in the standard \mbox{2D-Gantt}
chart, which was originally introduced in~\cite{EastmanEI1964}.

For a given \reservation\ decision, we follow the idea of\cite{MegowV2013} and implicitly describe the completion time of a job
$j$ by the value of the function $W^{\mathcal{S}}$ at the time that $j$ completes.
This value is referred to as the \emph{starting weight} $S_j^w$ of job
$j$.  In analogy to the time-dimension, the value
$C_j^w := S_j^w +w_j$ is called \emph{completion weight} of job
$j$. When we specify a schedule in terms of the remaining weight
function, then we call it a \emph{weight-schedule}, otherwise a
\emph{time-schedule}.  Other terminologies, such as feasibility and
idle time, also translate from the time-dimension to the
weight-dimension.  A weight-schedule is called \emph{feasible} if no
two jobs overlap and the machine is called \emph{idle in
  weight-dimension} if there exists a point $w$ in the
weight-dimension with $w\notin \left[S_j^w, C_j^w\right]$ for all jobs
$j\in J$.

A weight-schedule together with a \reservation\ decision can be
translated into a time-schedule by ordering the job in decreasing
order of completion weights and scheduling them in this order in the
time-dimension in the \reserved\ time slots.  For a given \reservation\ decision, consider a weight-schedule $\mathcal{S}$ with completion
weights $C_1^w > \cdots > C_n^w > C_{n+1}^w:=0$ and the corresponding
completion times $0=:C_0 < C_1 < \cdots < C_n$ for the jobs
$j=1,\ldots,n$.  We define the \emph{(scheduling) cost of a weight-schedule} $\mathcal{S}$ as
$\sum_{j=1}^n \left( C_{j}^w -C_{j+1}^w \right) C_j$.  This value equals
$\sum_{j=1}^n \pi_j^{\mathcal{S}} C_j^w$, where
$\pi_j^{\mathcal{S}}:= C_j -S_j$, if and only if there is no idle
weight.  If there is idle weight, then the cost of a weight-schedule
can only be greater, and we can safely remove idle weight without
increasing the scheduling cost~\cite{MegowV2013}.
Figure~\ref{fig:weightSchedule} illustrates this fact.

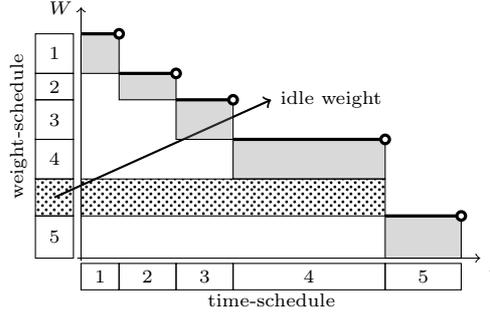
\begin{figure}
\begin{center}
\begin{tikzpicture}[scale=0.5,yscale=.7,font=\scriptsize]
\tikzstyle{disc}=[draw,fill=white,circle,very thick,inner sep=0.4mm]
\tikzstyle{cont}=[fill=black,circle,very thick,inner sep=0.4mm]

\def\N{9}

\def\x{{0,1,2.5,4,8,10}}%
\def\y{{8.5,7,6,4.5,3,1.6,0}}%

 \draw[->] (-.1,0) -- (10.5,0) node[below right] {$t$};
 \draw[->] (0,-.1) -- (0,9.5) node[left] {$W$};

\begin{scope}[yshift=-1.2cm]
 \foreach \i/\j in {0/1 , 1/2, 2/3 , 3/4 , 4/5} {%
  \pgfmathsetmacro{\xa}{\x[\i]}
  \pgfmathsetmacro{\xb}{\x[\i+1]}
  \draw (\xa,0) rectangle node {\j} (\xb,1);
 }
 \node at (5,-.4) {time-schedule};
\end{scope}

\begin{scope}[xshift=-1.2cm]
 \foreach \i in {0,...,3} {%
  \pgfmathsetmacro{\ya}{\y[\i]}
  \pgfmathsetmacro{\yb}{\y[\i+1]}
  \draw (0,\ya) rectangle node {\pgfmathparse{\i+1}\pgfmathprintnumber{\pgfmathresult}} (1,\yb);
 }
  \draw[pattern=crosshatch dots] (0,\y[4]) rectangle (1,\y[5]);
  \draw (0,\y[5]) rectangle node {5} (1,\y[6]);
  \node[rotate=90,above] at (0,5) {weight-schedule};
\end{scope}
\draw[->,thick] (-.7,2.3) -- (5,6) node[right] {idle weight};

 \foreach \i in {0,...,3} {%
  \pgfmathsetmacro{\xa}{\x[\i]}
  \pgfmathsetmacro{\xb}{\x[\i+1]}
  \pgfmathsetmacro{\ya}{\y[\i]}
  \pgfmathsetmacro{\yb}{\y[\i+1]}
  \draw[fill=gray,fill opacity=0.3] (\xa,\yb) rectangle (\xb,\ya);
 }
 \draw[fill=gray,fill opacity=0.3] (\x[4],\y[6]) rectangle (\x[5],\y[5]);
 \draw[pattern=crosshatch dots] (0,\y[4]) rectangle (\x[4],\y[5]);

 \foreach \i in {0,...,3} {%
  \pgfmathsetmacro{\xa}{\x[\i]}
  \pgfmathsetmacro{\xb}{\x[\i+1]}
  \pgfmathsetmacro{\ya}{\y[\i]}
  \draw[very thick]  (\xa,\ya)  -- (\xb,\ya) node[disc] {};
  }
 \draw[very thick]  (\x[4],\y[5])  -- (\x[5],\y[5]) node[disc] {};

\end{tikzpicture}
\end{center}
\caption{\textbf{2D-Gantt chart.} The x-axis shows a time-schedule, while the y-axis corresponds to $W(t) =  \sum_{C_j>t} w_j$ plus the idle weight in the corresponding weight-schedule \cite{MegowV2013}. \label{fig:weightSchedule}}
\end{figure}

Summarizing, a time-schedule implies a correspondent weight-schedule of the same cost. On the other hand, a weight-schedule plus a \reservation\ decision implies a time-schedule with a possibly smaller cost.

\subsection{Dynamic programming algorithm}
\label{sec:DP}
Let $\eps>0$.  Firstly, we scale the input parameters so that all job
weights $w_j$, $j=1,\ldots,n$, and all \tffs\ $e_k$,
$k=1,\ldots,K$, are non-negative integers.  Then, we apply standard
geometric rounding to the weights to gain more structure on the input,
i.e, we round the weights of all jobs up to the next integer power of
$(1+\varepsilon)$, by losing at most a factor $(1+\eps)$ in the
objective value.  Furthermore, we discretize the weight-space into
intervals of exponentially increasing size: we define intervals
$W\!I_u := [\left(1+\varepsilon\right)^{u-1},
\left(1+\varepsilon\right)^{u} )$
for $u = 1, \ldots, \nu$ with
$\nu := \lceil \log_{1+\varepsilon} \sum_{j\in J} w_j\rceil$.

Consider a subset of jobs $J' \subseteq J$ and a partial
weight-schedule of $J'$.  In the dynamic program, the set $J'$
represents the set of jobs at the beginning of a corresponding
weight-schedule, i.e., if $j\in J'$ and $k\in J\setminus J'$, then
\mbox{$C_j^w < C_k^w$}.  However, the jobs in $J'$ are scheduled at
the end in a corresponding time-schedule.  As discussed in
Section~\ref{sec:prelim-ptas}, a partial weight-schedule $\mathcal{S}$
for the jobs in $J \setminus J'$ together with a \reservation\ decision
for these jobs can be translated into a time-schedule.

Let
$\mathcal{F}_u := \{ J_u \subseteq J : \sum_{j\in J_u}w_j \leq \left(
  1+\varepsilon\right)^u \}$
for $u=1,\ldots,\nu$.  The set~$\mathcal{F}_u$ contains all the
possible job sets $J_u$ that can be scheduled in $W\!I_u$ or before.
Additionally, we define $\mathcal{F}_0$ to be the set that contains
only the set of all zero-weight jobs $J_0 := \{ j \in J : w_j
=0\}$.
The following observation allows us to restrict to simplified completion weights.

\begin{observation}\label{obs:CompletionWeight}
  Consider an optimal weight-schedule in which the set of jobs with
  completion weight in $W\!I_{u}$, $u\in \{1,\ldots,\nu \}$, is
  exactly $J_{u}\setminus J_{u-1}$ for some $J_{u}\in \mathcal{F}_{u}$
  and $J_{u-1} \in \mathcal{F}_{u-1}$.  By losing at most a factor
  $(1+\eps)$ in the objective value, we can assume that for all
  $u\in \{1,\ldots,\nu \}$ the completion weight of the jobs in
  $J_{u}\setminus J_{u-1}$ is exactly~$(1+\eps)^{u}$.
\end{observation}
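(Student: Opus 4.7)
The plan is to take any optimal weight-schedule $\mathcal{S}$ satisfying the hypothesis and construct from it a modified weight-schedule $\hat{\mathcal{S}}$ in which every job $j\in J_u\setminus J_{u-1}$ is assigned completion weight exactly $(1+\eps)^u$, while showing that the scheduling cost grows by at most a factor $(1+\eps)$. First I would define $\hat{\mathcal{S}}$ by keeping the reservation decision and the time-dimension order of $\mathcal{S}$ intact (so every completion time $C_j$ is unchanged) and resetting, for each $j\in J_u\setminus J_{u-1}$, its completion weight to $\hat{C}_j^w := (1+\eps)^u$; ties among jobs with equal $\hat{C}_j^w$ are broken consistently with the decreasing-$C_j^w$ order of~$\mathcal{S}$.

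The cost comparison then reduces to a single summation-by-parts identity applied to the weight-schedule cost formula from Section~\ref{sec:prelim-ptas}. Setting $C_0 := 0$ and $\hat{C}_{n+1}^w := 0$, a routine re-indexing gives
\begin{equation*}
\mathrm{cost}(\hat{\mathcal{S}}) \;=\; \sum_{j=1}^n \bigl(\hat{C}_j^w - \hat{C}_{j+1}^w\bigr)\,C_j \;=\; \sum_{j=1}^n \hat{C}_j^w\,(C_j - C_{j-1}),
\end{equation*}
and the analogous identity holds for $\mathcal{S}$. Because $\mathcal{S}$ is optimal and therefore idle-weight-free, $\mathrm{cost}(\mathcal{S}) = \sum_j C_j^w (C_j - C_{j-1}) = \sum_j w_j C_j$, the optimal scheduling cost.

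The ratio bound is then immediate. For every job $j$ the original completion weight lies in $W\!I_{u(j)} = [(1+\eps)^{u(j)-1},(1+\eps)^{u(j)})$, so $\hat{C}_j^w = (1+\eps)^{u(j)} \leq (1+\eps)\,C_j^w$; and because jobs in the time-schedule are indexed in increasing order of $C_j$, each term $C_j - C_{j-1}$ is non-negative. Hence
\begin{equation*}
\mathrm{cost}(\hat{\mathcal{S}}) \;\leq\; (1+\eps)\sum_{j=1}^n C_j^w (C_j - C_{j-1}) \;=\; (1+\eps)\,\mathrm{cost}(\mathcal{S}).
\end{equation*}

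The step I expect to be slightly delicate is handling the degeneracy introduced by having several jobs share the completion weight $(1+\eps)^u$, which is not literally a strict weight-schedule. The clean way to address this is to interpret $\hat{\mathcal{S}}$ as a weight-annotation whose telescoping cost sum is insensitive to tie-breaking (columns between jobs with equal $\hat{C}_j^w$ have height zero and drop out of the sum), and to note that the resulting value only overestimates the true cost of the associated time-schedule, so it still yields a valid upper bound suitable for use inside the DP.
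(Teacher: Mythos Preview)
Your argument is correct, and it is precisely the justification the paper leaves implicit: the observation is stated without proof, relying on the cost identity $\sum_j (C_j^w - C_{j+1}^w)C_j = \sum_j \pi_j^{\mathcal S} C_j^w$ from Section~\ref{sec:prelim-ptas} (with $\pi_j^{\mathcal S}=C_j-C_{j-1}$ when there is no idle time), from which the factor-$(1+\eps)$ bound is immediate once each $C_j^w$ is pushed up to the right endpoint of its interval $W\!I_{u}$. Your Abel-summation step recovers exactly this identity, and your treatment of ties and of the unchanged reservation cost is sound; since the utilization cost $E$ is kept fixed, the total objective $\sum_j w_jC_j + E$ also grows by at most a factor $(1+\eps)$.
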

The following observation follows from a simple interchange argument.
\begin{observation}\label{obs:ZeroWeightJobs}
   There is an optimal time-schedule in which $J_0$ is scheduled completely after all jobs
   in $J\setminus J_0$.
 \end{observation}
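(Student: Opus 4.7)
The plan is to prove the observation by a standard exchange argument that preserves the set of utilized time slots. Starting from any optimal time-schedule $\mathcal{S}$, I would keep the utilization decision (the set of time slots that are paid for) exactly the same, which immediately guarantees that the utilization cost $E(\mathcal{S})$ is unchanged by the modification. The only thing I will alter is the assignment of which job occupies which utilized slot.

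Concretely, I would construct a new schedule $\mathcal{S}'$ that uses the same utilized slots as $\mathcal{S}$, but orders the processing so that every job in $J\setminus J_0$ is completed before every job in $J_0$ starts. Within $J\setminus J_0$, I would preserve the relative processing order inherited from $\mathcal{S}$ (any order works, but preserving it keeps the argument clean); the jobs of $J_0$ are then placed on the remaining utilized slots at the end, in any order. Since $|J_0|$ jobs of total processing $\sum_{j\in J_0} p_j$ are moved to the tail of the schedule, every job $j \in J\setminus J_0$ is processed in slots that are no later than the slots it used in $\mathcal{S}$, so $C_j(\mathcal{S}') \le C_j(\mathcal{S})$.

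To conclude, I would compare the objective values. For the scheduling part,
\[
\sum_{j\in J} w_j C_j(\mathcal{S}') = \sum_{j\in J\setminus J_0} w_j C_j(\mathcal{S}') + \sum_{j\in J_0} 0 \cdot C_j(\mathcal{S}')
\le \sum_{j\in J\setminus J_0} w_j C_j(\mathcal{S}) = \sum_{j\in J} w_j C_j(\mathcal{S}),
\]
using $w_j=0$ for $j\in J_0$ in both schedules and $C_j(\mathcal{S}') \le C_j(\mathcal{S})$ for $j\in J\setminus J_0$. Combined with $E(\mathcal{S}')=E(\mathcal{S})$, the schedule $\mathcal{S}'$ is again optimal and has the desired property.

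I do not foresee any real obstacle: the observation hinges only on the fact that jobs in $J_0$ contribute nothing to $\sum_j w_j C_j$, so delaying them to the end of the schedule is free, and the utilization cost is defined purely in terms of which slots are paid for, independently of which job occupies which slot. The one point that requires a sentence of care is making sure the reordering uses exactly the same set of utilized slots; keeping that set fixed is precisely what decouples the scheduling-cost change from the utilization-cost change.
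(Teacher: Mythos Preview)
Your proposal is correct and is precisely the ``simple interchange argument'' the paper invokes without spelling out: fix the set of utilized slots, push all zero-weight jobs to the tail, and observe that completion times of positive-weight jobs can only decrease while the utilization cost is unchanged. There is nothing to add; your write-up actually supplies the details the paper omits.
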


The dynamic program recursively constructs states $Z=[J_u,b,avg]$ and computes for every state a time point~$t(Z)$ with the following meaning.
A state $Z=[J_u,b,avg]$ with time point $t(Z)$ expresses that there is
a feasible partial time-schedule~$\mathcal{S}$ for the jobs in
$J \setminus J_u$ with $J_u \in \mathcal{F}_u$ together with a \reservation\
decision for the time interval~$[0,t(Z))$ with total \reservation\ cost at most $b$ and for which the
\emph{average scheduling cost}, i.e.,
\[\frac{1}{t(Z)} \cdot \int_0^{t(Z)} W^{\mathcal{S}}\!\left( t\right)
\mathrm{d}t,\]
is at most $avg$. We remark that even if $\mathcal{S}$ only schedules jobs in $J\setminus J_u$, the remaining weight function $W^{\mathcal{S}}$ still considers jobs in $J\setminus J_u$, and thus $W^{\mathcal{S}}(t(Z))= \sum_{j\in J\setminus J_u}w_j$. Also, $\mathcal{S}$ implies a weight-schedule for jobs in $J\setminus J_u$ where the completion weights belong to $[\sum_{j\in J_u} w_j,\sum_{j\in J} w_j]$.
Note that $avg \cdot t(Z)$ is an upper bound on the
total scheduling cost of~$\mathcal{S}$ and that the average scheduling
cost is non-increasing in time, because the remaining weight function
$W^{\mathcal{S}}\!\left( t\right)$ is non-increasing in time.
In the \emph{iteration for $u$}, we only consider states $[J_u,b,avg]$ with $J_u\in \mathcal{F}_u$.
The states in the iteration for~$u$ are created based on the states from the iteration for~$u+1$.
Initially, we only have the state $Z_\nu = [J,0,0]$
with $t(Z_\nu):=0$, we start the dynamic program with $u=\nu-1$,
iteratively reduce $u$ by one, and stop the process after the
iteration for $u=0$.  In the iteration for~$u$, the states together
with their time points are constructed in the following way.  Consider
candidate sets \mbox{$J_{u+1} \in \mathcal{F}_{u+1}$} and
\mbox{$J_{u} \in \mathcal{F}_{u}$} with $\mathcal{F}_u\subseteq \mathcal{F}_{u+1}$, a partial
time-schedule~$\mathcal{S}$ of $J \setminus J_{u}$, in which the set
of jobs with completion weight (in the correspondent weight-schedule) in $W\!I_{u+1}$ is exactly
$J_{u+1}\setminus J_{u}$ and the set of jobs later than $W\!I_{u+1}$ is exactly $J\setminus J_{u+1}$, two budgets $b_1,b_2$ with $b_1 \leq b_2$,
and two bounds on the average scheduling cost $avg_1,
avg_2$.
Let \mbox{$Z_1= [J_{u+1},b_1,avg_1]$} and $Z_2= [J_{u},b_2,avg_2]$ be
the corresponding states.  We know that there is a feasible partial
schedule for the job set $J \setminus J_{u+1}$ up to time $t(Z_1)$
having average scheduling cost at most $avg_1$ and \reservation\ cost at
most $b_1$.  By augmenting this schedule, we want to compute a minimum
time point $t(Z_1,Z_2)$ that we associate with the link between $Z_1$
and $Z_2$ so that there is a feasible partial schedule for
$J \setminus J_{u}$ that processes the jobs from
$J_{u+1}\setminus J_{u}$ in the interval $[t(Z_1),t(Z_1,Z_2))$, has
average scheduling cost at most $avg_2$, and \reservation\ cost at most
$b_2$.  That is, $t(Z_1,Z_2)$ is the minimum makespan if we start with
$Z_1$ and want to arrive at~$Z_2$.  For the computation of
$t(Z_1,Z_2)$, we use the following subroutine.

Using Observation~\ref{obs:CompletionWeight}, we approximate the area
under the remaining weight function $W^{\mathcal{S}}\!\left( t\right)$
for the jobs in $J_{u+1}\setminus J_{u}$ by
$(1+\eps)^{u+1} \cdot (t(Z_1,Z_2)-t(Z_1))$, where $t(Z_1,Z_2)$ is the
time point that we want to compute.  Approximating this area gives us
the flexibility to schedule the jobs in $J_{u+1}\setminus J_{u}$ in
any order.  However, we need that $avg_2 \cdot t(Z_1,Z_2)$ is an upper
bound on the integral of the remaining weight function by time
$t(Z_1,Z_2)$.  That is, we want that
\[ avg_2 \cdot t(Z_1,Z_2) \geq (1+\eps)^{u+1} \cdot t(Z_1,Z_2) +
t(Z_1)\cdot (avg_1- (1+\eps)^{u+1}).\]
Both the left-hand side and the right-hand side of this inequality are
linear functions in $t(Z_1,Z_2)$.  So, we can compute a smallest time
point $t^{LB}$ such that the right-hand side is greater or equal to
the left-hand side for all $t(Z_1,Z_2) \geq t^{LB}$.  If there is no
such $t^{LB}$, then we set $t(Z_1,Z_2)$ to infinity and stop the
subroutine.  Otherwise, we know that our average scheduling cost at
$t^{LB}$ or later is at most $avg_2$.  Let $E(p,[t_1,t_2))$ denote the
total cost of the $p$ cheapest slots in the time-interval~$[t_1,t_2)$.  We
compute the smallest time point $t(Z_1,Z_2) \geq t^{LB}$ so that the
set of jobs $J_{u+1}\setminus J_{u}$ can be feasibly scheduled in
$[t(Z_1),t(Z_1,Z_2))$ having \reservation\ cost not more than
$b_2 - b_1$.  That is, we set
\[t(Z_1,Z_2)=\min\left\{t \geq \max\{t(Z_1),t^{LB}\} :
  E(p(J_{u+1}\setminus J_{u}),[t(Z_1),t)) \leq b_2-b_1\right\}.\]
The time point~$t(Z_1,Z_2)$ can be computed in polynomial time by
applying binary search to the interval~$[\max\{t(Z_1),t^{LB}\},d_K)$,
since $E(p,[t_1,t_2))$ is a monotone function in $t_2$.

Given all possible states $[J_{u+1},b_1,avg_1]$ from the iteration for $u+1$,
the dynamic program enumerates for all these states all possible links
to states $[J_u,b_2,avg_2]$ from the iteration for $u$ fulfilling the above
requirement on the candidate sets $J_{u+1}$ and $J_u$, on the budgets
$b_1$ and $b_2$, and on the average scheduling costs $avg_1$ and
$avg_2$.  For any such possible link $(Z_1,Z_2)$ between states from
the iteration for $u+1$ and $u$, we apply the above subroutine and associate
the time point $t(Z_1,Z_2)$ with this link.  Thus, the dynamic program
associates several possible time points with a state
$Z_2 = [J_{u},b_2,avg_2]$ from the iteration for $u$.  However, we only keep
the link with the smallest associated time point $t(Z_1,Z_2)$ (ties
are broken arbitrarily) and this defines the time point $t(Z_2)$ that
we associate with the state $Z_2$.  That is, for a state $Z_2$ from
the iteration for $u$ we define
$t(Z_2) := \min\{t(Z_1,Z_2)\,|\,Z_1 \mbox{ is a state from the iteration for }
u+1\}$.

Let $E_{\max}$ be an upper bound on the total \reservation\ cost in an optimal solution, e.g., the total cost of the first
  $p(J)$ finite-cost time slots.
The dynamic
program does not enumerate all possible budgets but only a polynomial
number of them, namely budgets with integer powers of $(1+\eta_1)$
with $\eta_1 > 0$ determined later.  That is, for the budget on the
\reservation\ cost, the dynamic program enumerates all values in
\[
B:=\{0,1,(1+\eta_1),(1+\eta_1)^2,\ldots,(1+\eta_1)^{\omega_1}\}
\textup{ with } \omega_1 = \lceil \log_{1+\eta_1} E_{\max}\rceil.
\]

The value $\eta_1$ will be chosen so that $(1+\eta_1)^{\omega_1}\le (1+\varepsilon)$
and $\omega_1$ is polynomial (see proof of Lemma~\ref{lemma:DPApprox} for the exact definition).
Similarly, we observe that $(1+\eps)^{\nu}$ is an upper bound on the
average scheduling cost.  The dynamic program does also only enumerate
a polynomial number of possible average scheduling costs, namely
integer powers of $(1+\eta_2)$ with $\eta_2 > 0$ also determined later.  This means, for the
average scheduling cost, the dynamic program enumerates all values in
\[ AVG:=\{0,1,(1+\eta_2),(1+\eta_2)^2,\ldots,(1+\eta_2)^{\omega_2}\}
\textup{ with } \omega_2 = \lceil \nu \log_{1+\eta_2} (1+\eps)\rceil.
\]

As before, the value $\eta_2$ will be chosen so that
$(1+\eta_2)^{\omega_2}\le (1+\varepsilon)$ and $\omega_2$ is polynomial.
The dynamic program stops after the iteration for $u=0$.  Now, only
the set of zero-weight jobs is not scheduled yet.  For any state
$Z=[J_0,b,avg]$ constructed in the iteration for $u=0$, we append the
zero-weight jobs starting at time $t(Z)$ and \reserving\ the cheapest
slots, which is justified by Observation~\ref{obs:ZeroWeightJobs}.
We add the additional \reservation\ cost to $b$.  After this, we return the
state $Z=[J_0,b,avg]$ and its corresponding schedule, which can be
computed by backtracking and following the established links, with
minimum total cost $b+ avg\cdot t(Z)$.  With this, we obtain the
following result.
\begin{lemma}\label{lemma:DPApprox}
  The dynamic program computes a $(1+O(\eps))$-approximate
  solution.
\end{lemma}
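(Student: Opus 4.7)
The plan is to exhibit, starting from an optimal schedule, a chain of DP states whose terminal cost approximates $\mathrm{OPT}$ within a factor $(1+O(\eps))$; since the DP returns the cheapest state it constructs, this suffices.

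First I would take an optimal time-schedule $\mathcal{S}^*$ and, via the weight-schedule viewpoint of Section~\ref{sec:prelim-ptas} combined with Observation~\ref{obs:CompletionWeight}, convert it into a weight-schedule $\widehat{\mathcal{S}}$ of cost at most $(1+\eps)\,\mathrm{OPT}$ in which every completion weight lies on a grid point $(1+\eps)^u$. This yields nested sets $J_0 \subseteq J_1 \subseteq \cdots \subseteq J_\nu = J$ with $J_u \in \mathcal{F}_u$, where $J_u$ is the set of jobs whose completion weight in $\widehat{\mathcal{S}}$ is at most $(1+\eps)^u$. For each $u$ let $\hat t_u$ be the latest completion time in the time-schedule corresponding to $\widehat{\mathcal{S}}$ among jobs in $J \setminus J_u$, let $\hat b_u$ be the reservation cost paid by $\widehat{\mathcal{S}}$ in $[0,\hat t_u)$, and let $\widehat{avg}_u := \tfrac{1}{\hat t_u}\int_0^{\hat t_u} W^{\widehat{\mathcal{S}}}(t)\,dt$. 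Rounding $\hat b_u$ up to the next element $b_u\in B$ and $\widehat{avg}_u$ up to the next element $avg_u \in AVG$ using $\eta_1,\eta_2$ chosen small enough that the aggregate rounding over the $\nu$ DP iterations costs at most a factor $(1+\eps)$ (while $\omega_1,\omega_2$ remain polynomial), gives $b_u \le (1+O(\eps))\,\hat b_u$ and $avg_u \le (1+O(\eps))\,\widehat{avg}_u$.

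The core induction, run from $u = \nu$ down to $u = 0$, is that the state $Z_u := [J_u,b_u,avg_u]$ is produced by the DP with associated time $t(Z_u) \le \hat t_u$. The base case is immediate. For the step, starting from $Z_{u+1}$ I would check that the subroutine yields $t(Z_{u+1},Z_u) \le \hat t_u$ on two fronts. The lower bound $t^{LB}$ is met at $t = \hat t_u$ because $avg_u\cdot \hat t_u$ exceeds $\int_0^{\hat t_u} W^{\widehat{\mathcal{S}}}(t)\,dt$, which in turn dominates the grid overapproximation $(1+\eps)^{u+1}(\hat t_u - t(Z_{u+1})) + t(Z_{u+1})\,(avg_{u+1} - (1+\eps)^{u+1})$ used in the subroutine's inequality; here I use the inductive bound $t(Z_{u+1}) \le \hat t_{u+1}$ together with Observation~\ref{obs:CompletionWeight} to ensure that replacing the area contributed by jobs in $J_{u+1}\setminus J_u$ by the uniform value $(1+\eps)^{u+1}$ only inflates the integral. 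The reservation constraint $E(p(J_{u+1}\setminus J_u),[t(Z_{u+1}),\hat t_u)) \le b_u - b_{u+1}$ holds because $\widehat{\mathcal{S}}$ itself uses slots of total cost at most $\hat b_u - \hat b_{u+1}$ in that window and the subroutine may pick only the cheapest ones.

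Once the iteration for $u=0$ finishes, I append the zero-weight jobs at the cheapest remaining slots, which is justified by Observation~\ref{obs:ZeroWeightJobs} and adds at most what $\widehat{\mathcal{S}}$ pays on $J_0$. The returned cost is then $b_0 + avg_0\cdot t(Z_0) \le (1+O(\eps))(\hat b_0 + \widehat{avg}_0\cdot \hat t_0) \le (1+O(\eps))\,\mathrm{cost}(\widehat{\mathcal{S}}) \le (1+O(\eps))\,\mathrm{OPT}$, which is the desired bound. The main obstacle I expect is the careful bookkeeping in the $t^{LB}$ inequality: one must show that the linear comparison defining $t^{LB}$, driven by the grid-based overapproximation $(1+\eps)^{u+1}$ of the completion weights, is already satisfied at $t = \hat t_u$ after $avg_u$ is inflated over $\widehat{avg}_u$, and simultaneously choose $\eta_1,\eta_2$ (likely $\eta_i = \Theta(\eps/\nu)$) so that the cumulative budget/average rounding remains within $(1+\eps)$ while keeping $|B|$ and $|AVG|$ polynomial.
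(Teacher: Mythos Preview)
Your high-level strategy---track an optimal solution by a chain of DP states $Z_u=[J_u,b_u,avg_u]$ with $t(Z_u)\le \hat t_u$---is exactly the paper's, but there is a genuine gap in how you define $b_u$ and $avg_u$. You round each $\hat b_u$ and each $\widehat{avg}_u$ \emph{independently} to the next grid point. This breaks both checks in the subroutine.

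For the reservation budget, you need $E\big(p(J_{u+1}\setminus J_u),[t(Z_{u+1}),\hat t_u)\big)\le b_u-b_{u+1}$. You correctly argue that the cheapest slots cost at most $\hat b_u-\hat b_{u+1}$, but independent upward rounding does \emph{not} give $b_u-b_{u+1}\ge \hat b_u-\hat b_{u+1}$; e.g.\ if $\hat b_{u+1}$ rounds up a lot while $\hat b_u$ barely rounds, the rounded difference shrinks. For the $t^{LB}$ check, the subroutine's inequality at $t=\hat t_u$ reads
\[
(avg_u-(1+\eps)^{u+1})\,\hat t_u \;\ge\; t(Z_{u+1})\,(avg_{u+1}-(1+\eps)^{u+1}),
\]
and the right-hand side uses the \emph{rounded-up} $avg_{u+1}$. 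With the grid assumption one has $(\widehat{avg}_u-(1+\eps)^{u+1})\hat t_u=(\widehat{avg}_{u+1}-(1+\eps)^{u+1})\hat t_{u+1}$, so after bounding $t(Z_{u+1})\le\hat t_{u+1}$ the inequality reduces to $\widehat{avg}_{u+1}\ge avg_{u+1}$, which is false.

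The fix, and what the paper does, is to define the witness state \emph{cumulatively}: set $b_u$ to be the round-up (in $B$) of $b_{u+1}+(\hat b_u-\hat b_{u+1})$, and $avg_u$ to be the round-up (in $AVG$) of $\big(avg_{u+1}\cdot t(Z_{u+1})+(1+\eps)^{u+1}(\hat t_u-t(Z_{u+1}))\big)/\hat t_u$. Then $b_u-b_{u+1}\ge \hat b_u-\hat b_{u+1}$ and the $t^{LB}$ inequality at $\hat t_u$ hold \emph{by construction}, and one shows inductively that $b_u\le(1+\eta_1)^{\nu-u}\hat b_u$ and $avg_u\le(1+\eta_2)^{\nu-u}\widehat{avg}_u$ (the latter uses $\widehat{avg}_{u+1}\ge(1+\eps)^{u+1}$). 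Your choice $\eta_i=\Theta(\eps/\nu)$ is then the right one.
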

\begin{proof}
  Consider an arbitrary iteration $u$ of the dynamic program and let $i=\nu-u$.
  We consider states $Z=[J_u,b,avg]$ with
  $J_u \in \mathcal{F}_u$, $b\in B$, and $avg \in AVG$ for which we
  construct the time points $t(Z)$.  Let
  $Z_1^* = [J_{u+1}^*,b_1^*,avg_1^*]$ and
  $Z_2^* = [J_u^*,b_2^*,avg_2^*]$ with
  $J_{u+1}^* \in \mathcal{F}_{u+1}$ and $J_u^* \in \mathcal{F}_{u}$ be
  the states that represent an optimal solution $\mathcal{S}^*$ for which the
  set of jobs with completion weight in $W\!I_{u+1}$ is exactly
  $J_{u+1}^*\setminus J_{u}^*$.  By
  Observation~\ref{obs:CompletionWeight}, we assume that also in $\mathcal{S}^*$
  the area under the remaining weight function
  $W^{\mathcal{S^*}}\!\left( t\right)$ for the jobs in
  $J_{u+1}^*\setminus J_{u}^*$ is approximated by
  $(1+\eps)^{u+1} \cdot (t(Z_2^*)-t(Z_1^*))$.  We now show the
  following.  The dynamic program constructs in iteration $i$ a state
  $Z=[J_u,b,avg]$ with $J_u \in \mathcal{F}_u$, $b\in B$, and
  $avg \in AVG$ such that
  \begin{enumerate}[(i)]
  \item $J_u = J_u^*$,
  \item $b \leq (1+\eta_1)^i \cdot b_2^*$,
  \item $avg \leq (1+\eta_2)^i \cdot avg_2^*$, and
  \item $t(Z) \leq t(Z_2^*)$.
  \end{enumerate}
  We prove this statement by induction on $i=1,\ldots,\nu$.  Consider the first
  iteration of the dynamic program, in which we consider states with
  job sets from $\mathcal{F}_{\nu-1}$.  Let
  $Z^*=[J_{\nu -1}^*,b^*,avg^*]$ be the state that corresponds to the
  optimal solution $\mathcal{S}^*$.  The dynamic program also considers the job
  set $J_{\nu -1}^*$.  Suppose, we \reserve\ the same slots that $\mathcal{S}^*$
  \reserves\ for the jobs in $J \setminus J_{\nu -1}^*$ in the interval
  $[0,t(Z^*))$.  Let~$b$ be the resulting \reservation\ cost after
  rounding $b^*$ up to the next value in $B$.  With this, we know that
  $b \leq (1+\eta_1) \cdot b^*$.  Furthermore, by our assumption, we
  know that the average scheduling cost of $\mathcal{S}^*$ up to time $t(Z^*)$
  is $(1+\eps)^{\nu}$.  Let $avg$ be $(1+\eps)^{\nu}$ rounded up to
  the next value in $AVG$.  Then we know that
  $avg \leq (1+\eta_2) \cdot avg^*$.  The dynamic program also
  considers the state $Z=[J_{\nu-1}^*,b,avg]$.  However, the dynamic
  program computes the \emph{minimum} time point
  $t(Z_{\nu},Z) \geq t^{LB}$ so that the set of jobs
  $J\setminus J_{\nu -1}^*$ can be feasibly scheduled in
  $[0,t(Z_{\nu},Z))$ having \reservation\ cost not more than $b$.  This
  implies that $t(Z_{\nu},Z) \leq t(Z^*)$, which implies that
  $t(Z) \leq t(Z^*)$.  Note that $t^{LB}=0$ for the specified values
  in $Z$.

  Suppose, the statement is true for the iterations $1,2,\ldots,i-1$.
  We prove that it is also true for iteration $i$, in which we
  consider job sets from $\mathcal{F}_u$.  Again, let
  $Z_1^* = [J_{u+1}^*,b_1^*,avg_1^*]$ and
  $Z_2^* = [J_u^*,b_2^*,avg_2^*]$ with
  $J_{u+1}^* \in \mathcal{F}_{u+1}$ and $J_u^* \in \mathcal{F}_{u}$ be
  the states that represent $\mathcal{S}^*$.  By our hypothesis, we know that
  the dynamic program constructs a state $Z_1 =[J_{u+1},b_1,avg_1]$ with
  \begin{enumerate}[(i)]
  \item $J_{u+1} = J_{u+1}^*$,
  \item $b_1 \leq (1+\eta_1)^{i-1} \cdot b_1^*$,
  \item $avg_1 \leq (1+\eta_2)^{i-1}     \cdot avg_1^*$, and
  \item $t(Z_1) \leq t(Z_1^*)$.
  \end{enumerate}
  We augment this schedule in the following way.  Suppose, we \reserve\   the same slots that $\mathcal{S}^*$ \reserves\ for the jobs in
  $J_{u+1}^* \setminus J_{u}^*$ in the interval $[t(Z_1^*),t(Z^*_2))$.
  Let~$b_2$ be the resulting total \reservation\ cost after rounding up
  to the next value in $B$.  Thus, there is a feasible schedule for
  $J\setminus J_u^*$ having \reservation\ cost of at most
  \begin{eqnarray*}
    b_2 &\leq & (1+\eta_1)\cdot (b_1 + b_2^* - b_1^*) \\
    &\leq & (1+\eta_1)^{i} \cdot (b_1^* + b_2^* - b_1^*) \\
    &=& (1+\eta_1)^{i} \cdot b_2^*.
  \end{eqnarray*}
  The new average scheduling cost after rounding to the next value in
  $AVG$ is
  \begin{eqnarray*}
    avg_2 &\leq & (1+\eta_2) \cdot \frac{avg_1 \cdot t(Z_1) + (1+\eps)^{u+1}\cdot \left(t(Z_2^*) - t(Z_1)\right)}{t(Z_2^*)}\\
          &\leq & (1+\eta_2)^{i} \cdot \frac{avg^*_1 \cdot t(Z_1) + (1+\eps)^{u+1}\cdot \left(t(Z_2^*) - t(Z_1)\right)}{t(Z_2^*)}\\
          &\leq & (1+\eta_2)^{i} \cdot \frac{avg^*_1 \cdot t(Z^*_1) + (1+\eps)^{u+1}\cdot \left(t(Z_2^*) - t(Z^*_1)\right)}{t(Z_2^*)}\\
          &=& (1+\eta_2)^{i} \cdot avg_2^*.
  \end{eqnarray*}
  The third inequality follows from the fact that
  $avg_1^* \geq (1+\eps)^{u+1}$.  The dynamic program also considers
  the link between the state $Z_1$ and $Z_2:= [J_{u}^*,b_2,avg_2]$.
  We first observe that $t^{LB} \leq t(Z_2^*)$, since
  \[avg_2 \cdot t(Z_2^*) \geq avg_1 \cdot t(Z_1) + (1+\eps)^{u+1}\cdot
  (t(Z_2^*)-t(Z_1))\]
  by construction of $avg_2$.  Furthermore, we observe that
  $b_2 - b_1 \geq b_2^* - b_1^*$ by construction of $b_2$.  These two
  facts together with $t(Z_1)\leq t(Z_1^*)$ imply that
  $t(Z_1,Z_2) \leq t(Z_2^*)$, which implies that
  $t(Z_2)\leq t(Z_2^*)$.

To complete the proof, we need to specify the parameters $\eta_1$ and $\eta_2$.
We want that $(1+\eta_i)^{\nu} \leq (1+\eps)$ for $i=1,2$.
We claim that for a given $\nu \geq 1$ there exists an $\bar{\eta}>0$ such that for all $\eta \in (0,\bar{\eta}]$ we have $(1+\eta)^{\nu} \leq 1+2\nu\eta$.
Consider the function $f(\eta):=(1+\eta)^{\nu} - 1 - 2\nu\eta$.
We have that $f(0)=0$ and $f'(\eta) < 0$ for $\eta \in [0,2^{1/{(\nu-1)}}-1)$.
This shows the claim.
Hence, we choose $\eta_i = \min \{ \frac{\eps}{2\nu},2^{1/{(\nu-1)}}-1\}$ for $i=1,2$.
This shows the statement of the lemma and that the size of $B$ as well as the size of $AVG$ are bounded by a polynomial in the size of the input.
\end{proof}

We remark that the given DP works for more general \reservation\ cost
functions $e:\mathbb{N} \rightarrow \mathbb{Q}_{\geq 0}$ than
considered here in the paper.  As argued in the proof, it is sufficient for
the DP that there is a function $E(p,[t_1,t_2))$ that outputs in polynomial time for a given time
interval~$[t_1,t_2)$ and a given $p\in \mathbb{Z}_{\geq 0}$ the total cost of the~$p$ cheapest slots in
$[t_1,t_2)$.

We also remark that the running time of the presented DP is
exponential, because the size of the sets $\mathcal{F}_u$ are
exponential in the size of the input.  However, in the next section we
show that we can trim the sets $\mathcal{F}_u$ down to ones of
polynomial size at an arbitrarily small loss in the performance
guarantee.

\subsection{Trimming the state space}
\label{sec:trimming}

The set $\mathcal{F}_u$, containing all possible job sets $J_u$, is of
exponential size, and so is the DP state space. In the context of
scheduling with variable machine speed, it has been shown
in~\cite{MegowV2013} how to reduce the set $\mathcal{F}_u$ for a
similar DP (without \reservation\ decision, though) to a set
$\tilde{\mathcal{F}}_u$ of polynomial size at only a small loss in the
objective value. In general, such a procedure is not necessarily
applicable to our setting because of the different objective involving
additional \reservation\ cost and the different decision space. However,
the compactification in~\cite{MegowV2013} holds {\em independently of
  the speed of the machine} and, thus, independently of the
\reservation\ decision of the DP (interpret non/\reservation\ as speed
$0/1$). Hence, we can apply it to our cost-aware scheduling framework
and obtain a PTAS.  We now describe the building blocks for this
trimming procedure and argue why we can apply it in order to obtain
the set $\tilde{\mathcal{F}}_u$ for our problem.

\subsubsection{Light Jobs.}

The first building block for the trimming procedure is a
classification of the jobs based on their weights.

\begin{definition}
  Given a weight schedule and a job $j \in J$ with starting weight
  $S_j^w\in W\!I_u$, we call job $j$ \emph{light} if
  $w_j\le \eps^2 |W\!I_u|$, otherwise $j$ is called \emph{heavy}.
\end{definition}
This classification enables us to structure near-optimal solutions.
To impose structure on the set of light jobs, the authors in~\cite{MegowV2013} describe the following routine for a given weight schedule~$\mathcal{S}$.
First, remove all light jobs from $\mathcal{S}$ and move the remaining jobs within each interval $W\!I_u$ so that the idle weight in $W\!I_u$ is consecutive.
Then, schedule the light jobs according to the \emph{reverse Smith's rule}, that is, for each $u=1,\ldots,\nu$ and each idle weight $w\in W\!I_u$, process at $w$ a light job~$j$ that maximizes $p_{j}/w_{j}$.
Eventually, shift the processing of each interval $W\!I_u$ to $W\!I_{u+1}$, which delays the completion of every job by at most a factor of $(1+\eps)^2$.
This delay allows to completely process every light job in the weight interval where it starts processing.
It can be shown that the cost of the resulting schedule is at most a factor of $1+O\left(\eps\right)$ greater than the cost of $\mathcal{S}$, which brings us to the following structural statement.

\begin{lemma}[\!\!\cite{MegowV2013}]\label{lemma:LightJobs}
  At a loss of a factor of $1+O\left(\eps\right)$ in the
  scheduling cost, we can assume the following. For a given interval
  $W\!I_u$, consider any pair of light jobs $j,k$. If both jobs start
  in $W\!I_u$ or later and $p_k/w_k\le p_j/w_j$, then
  $C_j^w\le C_k^w$.
\end{lemma}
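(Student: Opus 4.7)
The plan is to exhibit a structural transformation of any weight-schedule $\mathcal{S}$ into one that satisfies the claimed ordering of light jobs, at a loss of only a factor $1+O(\eps)$ in the scheduling cost $\int W^{\mathcal{S}}(t)\,\mathrm{d}t$. The transformation proceeds in four stages: (i) extract the light jobs; (ii) compact the heavy jobs within each weight interval $W\!I_u$; (iii) reinsert the light jobs by reverse Smith's rule; (iv) shift every weight interval up by one level to absorb any overflow.

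First I would remove every light job from $\mathcal{S}$ and, within each $W\!I_u$, slide the remaining heavy jobs so that the idle weight in $W\!I_u$ forms a single consecutive block at the top of the interval. Since a heavy job $j$ with $S_j^w \in W\!I_u$ satisfies $w_j > \eps^2 |W\!I_u|$, only $O(1/\eps^2)$ heavy jobs can start in each interval, the relative order of the heavy jobs is preserved, and this purely intra-interval compaction does not increase the scheduling cost of the heavy-only part beyond a factor $1+O(\eps)$.

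Next, process the intervals in the order $u = 1, 2, \ldots, \nu$ and insert the light jobs into the remaining idle weight via \emph{reverse Smith's rule}: at each idle weight level in $W\!I_u$, place a not-yet-scheduled light job of maximum ratio $p_j/w_j$ among those whose starting weight in $\mathcal{S}$ was in $W\!I_u$ or earlier. A standard pairwise exchange argument shows that this rule minimizes the light jobs' contribution to $\int W(t)\,\mathrm{d}t$ among all feasible insertions and, crucially, produces exactly the required monotonicity: for two light jobs $j, k$ starting in $W\!I_u$ or later with $p_k/w_k \leq p_j/w_j$, job $j$ is assigned the smaller starting weight, hence $C_j^w \leq C_k^w$.

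The delicate step is~(iv), where one must justify that the idle weight inside each $W\!I_u$ truly suffices (up to the single-interval shift) to accommodate all light jobs that originally started there. To create the needed room, I would uniformly stretch the weight-schedule so that the image of $W\!I_u$ is $W\!I_{u+1}$; because $|W\!I_{u+1}| = (1+\eps)|W\!I_u|$, the extra room gained is $\eps |W\!I_u|$, which strictly exceeds the weight $\eps^2 |W\!I_u|$ of any single light job. A careful accounting, using that at most one light job can straddle the boundary between $W\!I_u$ and $W\!I_{u+1}$ and that the total light weight assigned to $W\!I_u$ in stage~(iii) is bounded by $|W\!I_u|$, shows that every light job now completes in the interval where it starts, so the ordering property established in stage~(iii) holds globally. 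Under this uniform scaling every completion weight, and hence every gap $C_j^w - C_{j+1}^w$, grows by at most a factor $1+\eps$, so $\int W^{\tilde{\mathcal{S}}}(t)\,\mathrm{d}t \leq (1+\eps)^2 \int W^{\mathcal{S}}(t)\,\mathrm{d}t = (1+O(\eps)) \int W^{\mathcal{S}}(t)\,\mathrm{d}t$, which is the claimed bound. The main obstacle is the bookkeeping in stage~(iv): one must verify that simultaneous overflows from consecutive intervals never cascade beyond a single upward shift, which is where the definition of ``light'' ($w_j \leq \eps^2 |W\!I_u|$) is essential.
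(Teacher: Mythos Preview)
Your four-stage outline---remove light jobs, compact heavy jobs within each $W\!I_u$, reinsert light jobs by reverse Smith's rule, then shift each $W\!I_u$ to $W\!I_{u+1}$---is exactly the routine the paper summarises, and the $(1+\eps)^2$ cost accounting for the shift is the same.

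There is, however, one substantive slip. In stage~(iii) you restrict the candidate pool at level $W\!I_u$ to light jobs whose \emph{original} starting weight lay in $W\!I_u$ or earlier. The routine the paper describes carries no such restriction: at every idle weight it selects, among \emph{all} not-yet-placed light jobs, one of globally maximum ratio $p_j/w_j$. Your availability constraint can break the very monotonicity the lemma asserts. Take a light job $j$ with $p_j/w_j=10$ originally starting in $W\!I_3$ and a light job $k$ with $p_k/w_k=5$ originally starting in $W\!I_1$. Under your rule $k$ is placed while processing $W\!I_1$ (since $j$ is not yet eligible), and $j$ only from $W\!I_3$ onward; in the resulting schedule both jobs start in $W\!I_1$ or later, $p_k/w_k\le p_j/w_j$, yet $C_j^w>C_k^w$, contradicting the lemma's conclusion. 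Dropping the restriction yields a global ordering of all light jobs by decreasing $p/w$, which is precisely what the lemma demands (the ``for a given $W\!I_u$'' quantifier is then satisfied trivially for every $u$). With that fix your argument coincides with the paper's.
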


We remark, that Lemma~\ref{lemma:LightJobs} holds independently of the
speed of the machine, as pointed out in \cite{MegowV2013}. This means that at a loss of a factor of
$1+O\left(\eps\right)$ in the scheduling cost we can
assume also for our problem that light jobs are scheduled according to \textit{reverse
  Smith's rule} in the weight-dimension, which holds independently of our actual \reservation\ decision.

\subsubsection{Localization.}
We now localize jobs in the weight-dimension to gain more
structure. That is, we determine for every job $j\in J$ two values
$r_j^w$ and $d_j^w$ such that, independently of our actual \reservation\ decision, $j$ is scheduled completely within
$\left[r_j^w,d_j^w\right)$ in some
$(1+O\left( \eps\right))$-approximate weight-schedule (in
terms of the scheduling cost).  We call $r_j^w$ and $d_j^w$ the
\emph{release-weight} and the \emph{deadline-weight} of job $j$,
respectively.

\begin{lemma}[\!\!\cite{MegowV2013}]
  \label{lemma:Localization}
  We can compute in polynomial time values $r_j^w$ and $d_j^w$ for
  each $j\in J$ such that: (i) there exists a
  $\left(1+O\left(\eps\right)\right)$-approximate
  weight-schedule (in terms of the scheduling cost) that processes
  each job $j$ within $[r_j^w,d_j^w)$, (ii) there exists a constant
  $s\in O\left(\log\left(1/\eps\right)/\eps\right)$ such
  that $d_j^w\le r_j^w\cdot (1+\eps)^s$, (iii) $r_j^w$ and $d_j^w$ are
  integer powers of $(1+\eps)$, and (iv) the values $r_j^w$ an $d_j^w$
  are independent of the speed of the machine.
\end{lemma}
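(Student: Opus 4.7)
The plan is to follow the strategy of \cite{MegowV2013}, emphasizing that everything is done purely in the weight-dimension so that property~(iv) comes essentially for free. The cost of a weight-schedule is $\int_0^\infty W^{\mathcal{S}}(t)\,\mathrm{d}t$, and after Lemma~\ref{lemma:LightJobs} the ordering of completion weights satisfies reverse Smith's rule for light jobs. Both of these facts are statements about the weight-schedule only; the translation to a time-schedule (and therefore the \reservation\ decision) only affects how weight-intervals are stretched in time, not the relative order in which jobs complete. Hence if we show that each job $j$ can be confined to a short window $[r_j^w,d_j^w)$ in the weight-dimension, the same window is valid regardless of the \reservation\ decision.

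First I would define, for every job $j$, a canonical weight position based on the reverse-Smith order of light jobs and a simple prefix-sum over heavy jobs: process jobs in non-increasing order of $p_j/w_j$ and let $\rho_j$ be the total weight of jobs that, in an optimal weight-schedule, would complete no later than $j$. Then I set $r_j^w$ to be the largest integer power of $(1+\eps)$ that is at most $\rho_j/(1+\eps)^{s/2}$, and $d_j^w := r_j^w \cdot (1+\eps)^s$ for a constant $s = O(\log(1/\eps)/\eps)$ to be fixed below. This immediately gives~(ii) and~(iii), and the computation is trivially polynomial.

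The key step is establishing~(i): starting from any weight-schedule, there is a weight-schedule of cost at most $(1+O(\eps))$ times larger in which every job lies in its window. The argument is a two-sided swap. If some job $j$ completes at a weight much smaller than $r_j^w$ (i.e., too late in time), then by the choice of $\rho_j$ there must exist a job $k$ with $p_k/w_k \ge p_j/w_j$ whose completion weight is large compared to $\rho_k$; swapping $j$ and $k$ in the weight-dimension keeps the schedule feasible (completion weights of other jobs are unchanged), and by a standard Smith-type calculation the area under the remaining weight function grows by a factor at most $(1+\eps)$ per swap. Iterating this, the telescoping cost blow-up is bounded by $(1+\eps)^{O(1/\eps)}=1+O(\eps)$ once $s$ is chosen large enough (namely $s = \Theta(\log(1/\eps)/\eps)$). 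The symmetric argument handles jobs that complete too early.

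The main obstacle is arguing that the swap accounting is truly speed/\reservation-independent. This works because a swap of two jobs in the weight-dimension only alters the assignment of processing times to weight-intervals; when translated back through any \reservation\ decision the additional time spent in a weight-interval $W\!I_u$ is $p_{\mathrm{new}}(u)-p_{\mathrm{old}}(u)$ multiplied by a factor that depends on the time-length of that weight-interval, and this factor cancels when we compare the two schedules, leaving exactly the weight-dimension Smith's-rule bound. This is precisely the decoupling of weight- and time-dimension exploited in \cite{MegowV2013}, and it carries over verbatim to our setting because the \reservation\ decision plays the role of a variable machine speed (with speed $0$ in unutilized slots and $1$ in utilized ones); the localization values $r_j^w,d_j^w$ thus depend only on the job parameters.
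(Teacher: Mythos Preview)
The paper does not give its own proof of this lemma; it is quoted verbatim from \cite{MegowV2013} and used as a black box, with the only additional remark being that property~(iv) lets the result carry over to the \reservation\ setting. So there is no ``paper's proof'' to compare against beyond the pointer to \cite{MegowV2013}.

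That said, your reconstruction has real gaps. First, the equality $(1+\eps)^{O(1/\eps)}=1+O(\eps)$ is false: $(1+\eps)^{1/\eps}\to e$, so the left-hand side is a constant bounded away from~$1$, not $1+O(\eps)$. If each swap truly cost a multiplicative $(1+\eps)$ and you performed $\Theta(1/\eps)$ of them, you would lose a constant factor, not a PTAS-compatible one. Second, the swap accounting is not as you describe: exchanging two jobs in the weight-dimension changes the area under $W^{\mathcal S}$ \emph{additively} (a Smith-type exchange term of the form $w_j p_k - w_k p_j$ in the time picture), not by a uniform multiplicative factor, so the telescoping you invoke does not exist. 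Third, your definition of $\rho_j$ is circular: you appeal to ``the total weight of jobs that, in an optimal weight-schedule, would complete no later than $j$'', but the optimal weight-schedule is unknown and need not follow reverse Smith's rule for heavy jobs (Lemma~\ref{lemma:LightJobs} constrains only light jobs).

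The argument in \cite{MegowV2013} is organized differently. The release-weight $r_j^w$ is taken essentially as the rounded-down total weight of jobs preceding $j$ in Smith's order (a quantity computable from the instance alone, with no reference to an optimal schedule), and the deadline-weight is obtained by a \emph{single} global shift of all weight intervals by $s=O(\log(1/\eps)/\eps)$ positions; the $(1+O(\eps))$ loss comes from one application of this shift together with the light-job restructuring of Lemma~\ref{lemma:LightJobs}, not from an iterated swap argument. The speed-independence (your property~(iv)) is then immediate because $r_j^w$ depends only on the $w_j$ and $p_j$ values. If you want to supply a proof here rather than cite, you should follow that line: define $r_j^w$ purely combinatorially, and replace the swap iteration by a single geometric shift whose cost is controlled by the exponent~$s$.
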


This lemma enables us to localize all jobs in $J$ in polynomial time
and independent of our actual \reservation\ decision, as guaranteed by
property (iv).

\subsubsection{Compact Search Space.} Based on the localization of
jobs in weight space, we can cut the number of different possibilities
for a candidate set $J_u$ in iteration~$u$ of our DP down to a
polynomial number.  That is, we replace the set $\mathcal{F}_u$ by a
polynomially sized set $\tilde{\mathcal{F}}_u$.  Instead of describing
all sets $S \in \tilde{\mathcal{F}}_u$ explicitly, the we give all
possible complements $R = J \setminus S$ and collect them in a set
$\mathcal{D}_u$, where a set $R\in \mathcal{D}_u$ represents a
possible set of jobs having completion weights in $W\!I_{u+1}$ or
later.  Obviously, a set $R \in \mathcal{D}_u$ must contain all jobs
$j \in J$ having a release weight $r_j^w \ge (1+\eps)^u$.
Furthermore, we know that $d_j^w \geq (1+\eps)^{u+1}$ is necessary for
job $j$ to be in a set $R \in \mathcal{D}_u$.  Following property (ii)
in Lemma~\ref{lemma:Localization}, we thus only need to decide about
the jobs having a release weight $r_j^w = (1+\eps)^i$ with
$i \in \left\{u+1-s,\ldots, u-1\right\}$.  An enumeration over
basically all possible job sets for each
$i \in \left\{u+1-s,\ldots, u-1\right\}$ gives the following desired
result.

\begin{lemma}[\!\!\cite{MegowV2013}]
  \label{lemma:CompactSearchSpace}
  For each $u$, we can construct in polynomial time a set
  $\tilde{\mathcal{F}}_u$ that satisfies the following: (i) there
  exists a $(1+O\left(\eps\right))$-approximate
  weight-schedule (in terms of the scheduling cost) in which the set
  of jobs with completion weight at most $(1+\eps)^u$ belongs to
  $\tilde{\mathcal{F}}_u$, (ii) the set $\tilde{\mathcal{F}}_u$ has
  cardinality at most
  $2^{O\left(\log^3(1/\eps)/\eps^2\right)}$, and (iii) the
  set $\tilde{\mathcal{F}}_u$ is completely independent of the speed
  of the machine.
\end{lemma}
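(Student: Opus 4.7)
\medskip
\noindent\textbf{Proof plan.}
The plan is to describe each candidate set $J_u \in \tilde{\mathcal{F}}_u$ by enumerating only over a small number of ``flexible'' jobs, while forcing the membership of the remaining ones based on Lemmas~\ref{lemma:LightJobs} and~\ref{lemma:Localization}. The central point to exploit is that both of these structural lemmas are independent of the reservation decision, so the resulting set $\tilde{\mathcal{F}}_u$ will inherit this independence for free.

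First, I would apply Lemma~\ref{lemma:Localization} and restrict attention to near-optimal weight-schedules in which each job~$j$ satisfies $S_j^w \geq r_j^w$ and $C_j^w \leq d_j^w$. For a given~$u$, any job with $d_j^w \leq (1+\eps)^u$ must lie in $J_u$ and any job with $r_j^w \geq (1+\eps)^u$ must lie in $J\setminus J_u$, so these contribute nothing to the enumeration. By property~(ii) of that lemma, the only genuinely undecided jobs are those with $r_j^w = (1+\eps)^i$ for some $i \in \{u-s+1,\ldots,u-1\}$, giving only $s = O(\log(1/\eps)/\eps)$ release-weight classes to deal with.

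Within each class~$i$ I would split the undecided jobs into heavy and light relative to the threshold $\eps^2|W\!I_i|$. Since all such jobs complete inside a window of at most $s$ consecutive weight-intervals, a counting argument bounds the number of heavy jobs per class by a function of $1/\eps$ only, and I would enumerate all $2^{O(\log^2(1/\eps)/\eps)}$ subsets of them. For the light jobs, Lemma~\ref{lemma:LightJobs} forces their completion weights to follow reverse Smith's order, so the subset of light jobs in $J_u$ must be a prefix of that order; a geometric rounding of the cumulative included light-job weight per class to the next integer power of $(1+\eps)$ keeps the number of relevant prefixes within the same bound, at a further $(1+O(\eps))$ loss in scheduling cost. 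Multiplying the choices over the $s$ release-weight classes yields $|\tilde{\mathcal{F}}_u| \leq 2^{O(\log^3(1/\eps)/\eps^2)}$, which gives~(ii).

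Property~(i) then follows because any near-optimal weight-schedule can, after the structural rearrangements from Lemmas~\ref{lemma:LightJobs} and~\ref{lemma:Localization} together with the light-job rounding, be matched to some $J_u \in \tilde{\mathcal{F}}_u$ at an overall $(1+O(\eps))$ loss. Property~(iii) is automatic, since Lemma~\ref{lemma:Localization}~(iv) and the purely job-data-based nature of reverse Smith's rule make the whole construction oblivious to the reservation decision. The main technical hurdle will be justifying the light-job rounding: one has to verify that discretising the cumulative included light-job weight independently in each class can be realised by a single consistent weight-schedule and that the induced perturbation of $W^{\mathcal{S}}(\cdot)$ stays inside the $(1+O(\eps))$ slack granted by Lemma~\ref{lemma:LightJobs}. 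This is precisely the argument executed in~\cite{MegowV2013} for the varying-speed setting, and because that argument is speed-independent, it ports to our cost-aware problem without modification.
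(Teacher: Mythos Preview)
Your proposal is correct and follows essentially the same route as the paper. The paper itself does not prove this lemma but only sketches the construction---fixing jobs whose release- or deadline-weight forces their position, then enumerating over the $s-1$ remaining release-weight classes---and defers the details to~\cite{MegowV2013}; your plan fills in exactly those details (heavy/light split, subset enumeration for heavy jobs, Smith-ordered prefixes with geometric rounding for light jobs) and, crucially, makes explicit the speed-independence that allows the argument to carry over to the reservation-cost setting.
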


Again, Property (iii) implies that we can construct the set
$\tilde{\mathcal{F}}_u$ independently of our \reservation\ decision.

To complete the proof of Theorem~\ref{thm:PTAS} it remains to argue on
the running time of the DP.  The DP has $\nu$ iterations, where in
each iteration for at most
$2^{O\left(\log^3(1/\eps)/\eps^2\right)} \cdot |B| \cdot
|AVG|$
previous states at most
$2^{O\left(\log^3(1/\eps)/\eps^2\right)} \cdot |B| \cdot
|AVG|$
many links to new states are considered.  Therefore, the running time
complexity of our DP is
$\nu \cdot (2^{O\left(\log^3(1/\eps)/\eps^2\right)} \cdot
|B| \cdot |AVG|)^2$,
which is bounded by a polynomial in the size of the input.

\section{Minimizing the makespan on unrelated machines}
\label{sec:Cmax}

Finally we derive positive results for the problem of minimizing makespan with \reservation\ costs on unrelated machines. The standard scheduling problem without \reservation\ cost
\abc{R}{pmtn}{C_{\max}} can be solved optimally in polynomial time by
solving a linear program as was shown by Lawler and
Labetoulle~\cite{LawlerL1978}.  We show that the problem complexity
does not increase significantly when taking into account time-varying
\reservation\ cost.

Consider the preemptive makespan minimization problem with \reservation\ cost.  Recall that we can use every machine in a \reserved\ time slot
and pay only once. Thus, it is sufficient to find an optimal
\reservation\ decision for solving this problem, because we can use the
polynomial-time algorithm in~\cite{LawlerL1978} to find the optimal
schedule within these slots.
\begin{observation}
  Given the set of time slots \reserved\ in an optimal solution, we can
  compute an optimal schedule in polynomial time.
\end{observation}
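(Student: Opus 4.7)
The plan is to reduce the problem to classical preemptive unrelated machine scheduling by compressing the time axis. Since the set $R$ of reserved time slots is fixed and given, the utilization cost $E = \sum_{[t,t+1) \in R} e(t)$ is a constant, so minimizing $C_{\max}+E$ over schedules confined to $R$ reduces to minimizing $C_{\max}$ subject to the constraint that all processing must take place inside slots of~$R$.

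First I would enumerate the slots of $R$ chronologically as $[a_1,a_1+1),\ldots,[a_m,a_m+1)$ and construct a \emph{virtual} continuous time axis $[0,m)$ obtained by concatenating them. On this virtual axis the problem becomes a standard instance of \abc{R}{pmtn}{C_{\max}} without reservation costs, which Lawler and Labetoulle~\cite{LawlerL1978} solve in polynomial time via linear programming. Let $\mathcal{S}^{\mathrm{virt}}$ be the returned schedule with virtual makespan $C^{\mathrm{virt}}$; since $R$ comes from a feasible (indeed optimal) solution, we have $C^{\mathrm{virt}} \le m$. Then I would lift $\mathcal{S}^{\mathrm{virt}}$ back to the real time axis by sending virtual time $u\in[k-1,k)$ to the corresponding point $a_k+(u-(k-1))$ inside the $k$-th reserved slot. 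Because preemption in \abc{R}{pmtn}{C_{\max}} is free at any (possibly non-integer) point in time, this lifting preserves feasibility on every machine, and all processing occurs only inside slots of $R$. The makespan of the lifted schedule $\mathcal{S}$ is then the image of $C^{\mathrm{virt}}$ under the mapping, namely $a_{\lceil C^{\mathrm{virt}}\rceil}+(C^{\mathrm{virt}}-\lfloor C^{\mathrm{virt}}\rfloor)$.

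The main obstacle I anticipate is the optimality argument: one must show that no feasible schedule confined to $R$ can finish earlier than $\mathcal{S}$ does. I expect this to follow from the Lawler--Labetoulle LP lower bound, since any feasible schedule using only slots in $R$ induces a fractional machine-assignment whose makespan on the virtual axis is at least $C^{\mathrm{virt}}$; applying the monotone lifting then yields an actual makespan no smaller than that of $\mathcal{S}$. Combined with the polynomial running time of the LP step and the trivial cost of the compression and lifting operations, this establishes the observation.
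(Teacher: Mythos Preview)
Your proposal is correct and takes essentially the same approach as the paper: the paper simply states that once the reserved slots are fixed, one applies the Lawler--Labetoulle polynomial-time algorithm for \abc{R}{pmtn}{C_{\max}} within those slots, and you have spelled out the time-compression/lifting argument that makes this precise. One minor technical slip: your formula for the lifted makespan, $a_{\lceil C^{\mathrm{virt}}\rceil}+(C^{\mathrm{virt}}-\lfloor C^{\mathrm{virt}}\rfloor)$, is off when $C^{\mathrm{virt}}$ is an integer (it should yield $a_{C^{\mathrm{virt}}}+1$, the end of the $C^{\mathrm{virt}}$-th slot), but this does not affect the argument.
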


Given an instance of our problem, let $Z$ be the optimal makespan of
the relaxed problem {\em without} \reservation\ cost.  Notice that $Z$
is not necessarily integral.  To determine an optimal \reservation\ decision, we use the following observation.
\begin{observation}\label{obs:OptCmax}
  Given an optimal makespan $C_{\max}^*$ for
  \abc{R}{pmtn}{C_{\max}+E}, an optimal schedule \reserves\ the
  $\lceil Z \rceil$ cheapest slots before~$\lceil C_{\max}^* \rceil$.
\end{observation}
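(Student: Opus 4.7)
The plan is to verify the observation in three structural steps once $C_{\max}^{*}$ and $Z$ are regarded as fixed.

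First, I would argue that every feasible schedule must pay for at least $\lceil Z\rceil$ unit slots. Since $Z$ is the optimal makespan of the relaxation without utilization cost, any preemptive assignment of the jobs to the unrelated machines needs at least $Z$ units of total machine time; otherwise Lawler--Labetoulle would yield a smaller makespan for the relaxation, contradicting optimality of $Z$. Because a slot is either fully paid for or not used at all, the number of utilized slots in any schedule is at least $\lceil Z\rceil$.

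Second, I would establish the converse: any choice of $\lceil Z\rceil$ unit slots contained in $[0,\lceil C_{\max}^{*}\rceil)$ admits a feasible preemptive schedule of all jobs. Such a choice is nonempty because $C_{\max}^{*}\ge Z$. For the construction I would take the Lawler--Labetoulle schedule of length $Z$ on the contiguous interval $[0,Z)$, pad it with idle time to length $\lceil Z\rceil$, and cut it into its $\lceil Z\rceil$ unit-length slabs $[i,i+1)$. I would then map these slabs in order onto the chosen utilized slots, preserving the job-to-machine assignment within each slab. Because within every single slab no machine processes two jobs at once and no job runs on two machines at once, and because preemption across slab boundaries is free in our model, the resulting schedule is feasible on the unrelated machines and completes all jobs by $\lceil C_{\max}^{*}\rceil$.

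Third, I would combine the two parts to finish. The utilization cost $E(\mathcal{S})=\sum_{t} e(t)y(t)$ is additive over the utilized slots, and an optimal schedule clearly uses no slot after $\lceil C_{\max}^{*}\rceil$. Hence, among all subsets of $[0,\lceil C_{\max}^{*}\rceil)$ of size at least $\lceil Z\rceil$, a cost minimizer uses exactly $\lceil Z\rceil$ slots and takes the $\lceil Z\rceil$ cheapest of them; by Step 2 this choice is feasible, and by Step 1 no cheaper feasible choice exists.

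The main obstacle is Step 2: one must argue that a preemptive schedule for unrelated machines can be transplanted from a contiguous interval onto an arbitrary set of unit slots without creating conflicts. The argument reduces to the observation that rearranging unit-length slabs of a feasible schedule preserves feasibility slab-by-slab, but this is the only nontrivial ingredient; Steps 1 and 3 are essentially bookkeeping on the additive cost structure.
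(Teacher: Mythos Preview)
Your proposal is correct. The paper states this as an observation without proof, so there is nothing substantive to compare against; your three-step argument is the natural justification the paper omits.

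One terminological quibble in Step~1: the phrase ``at least $Z$ units of total machine time'' is not the right quantity for unrelated machines, where total processing across machines is not directly tied to the makespan. What you actually need is that the \emph{number of reserved unit slots} is at least $\lceil Z\rceil$, and the cleanest argument is precisely your Step~2 run in reverse: if only $k<\lceil Z\rceil$ slots were reserved, concatenating their slabs contiguously would yield a feasible preemptive schedule of makespan at most $k<Z$, contradicting the optimality of $Z$ for the relaxation. With that fix, the proof is complete.
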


Note that we must pay full \tff\ for a used time slot, no
matter how much it is utilized, and so does an optimal solution.  In
particular, this holds for the last \reserved\ slot.  Hence, it remains
to compute  an optimal value $C^* :=\lceil C_{\max}^* \rceil$ which we
do by the following procedure.

We compute for every interval $I_k = \left[s_k,d_k\right)$,
$k=1,\ldots,K$, an optimal point in time for $C^*$ assuming that
$C^* \in I_k$. Hereby we restrict to relevant intervals $I_k$ which allow
for a feasible schedule, i.e., $s_k\geq \lceil Z \rceil$. For a
relevant interval $I_k$, we let $C^*=s_{k}$ and \reserve\ the $\lceil Z \rceil$ cheapest
time slots before $C^*$, which is optimal by
Observation~\ref{obs:OptCmax}. Notice that any \reserved\ time slot of cost~$e$ such that
$e > e_{k} +1$ can be replaced by a time slot from $I_{k}$ leading to
a solution of less total cost.  Thus, if there is no such time slot
then $s_{k}$ is the best choice for $C^*$ in $I_{k}$.
Suppose there is such a time slot that could be replaced. Let $R \subseteq \left\{1,\dots,k-1\right\}$ be the index
set of intervals that contain at least one \reserved\ slot.  We define
$I_{\ell}$ to be the interval with $e_{\ell} = \max_{h\in R} e_h$ and
denote by $r_{h}$ the number of \reserved\ time slots in $I_{h}$.
Replace $\min\{r_{\ell},d_k-s_k-r_k\}$ \reserved\ slots from $I_{\ell}$
by slots from $I_k$ and update $R$, $I_{\ell}$ and $r_k$.  This
continues until $e_{\ell} \leq e_{k}+1$ or the interval $I_k$ is
completely \reserved, i.e., $r_k = d_k-s_k$.  This operation takes at
most $O(K)$ computer operations per interval to compute the best
$C^*$-value in that interval.  It yields the following theorem.

\begin{theorem}\label{thm:SingleMachMakespan}
  The scheduling problem \abc{R}{pmtn}{C_{\max} + E} can be solved in
  polynomial time in the order of $O(K^2)$ plus the running time
  for solving \abc{R}{pmtn}{C_{\max}} without \reservation\ cost~\cite{LawlerL1978}.
\end{theorem}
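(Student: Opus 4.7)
The plan is to formalise the procedure sketched in the text above and to verify that it is correct and runs in the claimed time. By Observation~\ref{obs:OptCmax} any optimal \reservation\ is entirely determined by the integer $C^{*}:=\lceil C_{\max}^{*}\rceil$: the $\lceil Z\rceil$ cheapest slots in $[0,C^{*})$ are \reserved. Combined with the fact that an optimal schedule on a given set of \reserved\ slots is computed by Lawler--Labetoulle~\cite{LawlerL1978}, the task reduces to determining $C^{*}$, equivalently the interval $I_k$ containing it and the number $r_k\in\{0,\dots,d_k-s_k\}$ of slots taken from $I_k$.

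I would enumerate $k$ over the $O(K)$ relevant indices (those with $s_k\ge\lceil Z\rceil$), and for each fixed $k$ parameterise the total cost as
\[
\Phi_k(r_k)\;=\;(s_k+r_k)\;+\;r_k\,e_k\;+\;\sum_{h=1}^{\lceil Z\rceil-r_k}e_{(h)},
\]
where $e_{(1)}\le e_{(2)}\le\cdots$ are the sorted tariffs of the unit slots in $[0,s_k)$. The central claim, and the only nontrivial step, is that $\Phi_k$ is discretely convex. Indeed, $\Phi_k(r_k+1)-\Phi_k(r_k)=1+e_k-e_{(\lceil Z\rceil-r_k)}$, and $e_{(\lceil Z\rceil-r_k)}$ is non-increasing in $r_k$, so the marginal increment is non-decreasing. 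Hence the optimum is attained at the first $r_k$ for which the marginal is non-negative, i.e., for which $e_{(\lceil Z\rceil-r_k)}\le e_k+1$, or at $r_k=d_k-s_k$ if that bound is reached first. This is exactly the stopping condition of the greedy procedure. Moreover, since the most expensive currently \reserved\ slot outside $I_k$ lies in the interval $I_\ell$ with $e_\ell=\max_{h\in R}e_h$ and all such slots share the same tariff $e_\ell$, the marginal change is constant within a batch, so batch-replacing $\min\{r_\ell,d_k-s_k-r_k\}$ slots at once is equivalent to performing these replacements one by one.

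The running time follows by bookkeeping. For a fixed $k$, each batch either removes an interval from $R$ or saturates $I_k$, so at most $O(K)$ batches occur; keeping the intervals pre-sorted by tariff makes each batch cost $O(1)$ after $O(K)$ initialisation per outer iteration, giving $O(K)$ work per interval $I_k$ and $O(K^{2})$ in total. To this we add one call to Lawler--Labetoulle~\cite{LawlerL1978} for computing $Z$ and for producing the final schedule on the returned slots, yielding the stated complexity.

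The main obstacle is the convexity argument for $\Phi_k$; once monotonicity of the marginals is established, optimality of greedy within a fixed $I_k$ follows immediately. A handful of boundary cases should be stated explicitly: the case $r_k=d_k-s_k$ where $C^{*}=d_k=s_{k+1}$ coincides with what the iteration for $k+1$ would examine; the tie case $e_\ell=e_k+1$, in which the marginal is zero so breaking ties against replacement is safe; and the degenerate case $\lceil Z\rceil=0$, which needs no \reservation\ at all.
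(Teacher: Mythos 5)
Your proof is correct and takes essentially the same approach as the paper: enumerate the interval $I_k$ that contains $\lceil C_{\max}^*\rceil$ and, for each, use a greedy swap procedure to find the optimal number $r_k$ of slots to take from $I_k$, then charge $O(K)$ work per interval. The paper leaves the correctness of the greedy stopping rule at the level of a local exchange argument ("any reserved slot of cost $e>e_k+1$ can be replaced"), whereas you sharpen this by exhibiting the cost $\Phi_k(r_k)$ explicitly and noting it is discretely convex, so that the first non-negative marginal certifies global optimality within $I_k$; this is a cleaner justification but not a different method. One small remark: the makespan term should really be $(s_k+r_k)-(\lceil Z\rceil - Z)$ rather than $s_k+r_k$, since the final reserved slot is only partially used; this fractional offset is the same constant for every candidate $(k,r_k)$, so it does not affect the argmin and your conclusion stands.
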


\section{Conclusion}
We investigate basic scheduling problems within the framework
of time-varying costs or tariffs, where the processing of jobs
causes some time-dependent cost in addition to the usual QoS
measure.  We presented optimal algorithms and best possible
approximation algorithms for the scheduling objectives of minimizing
the makespan on unrelated machines and the sum of (weighted)
completion times on a single machine.

While our work closes the problems under consideration from an approximation point of view, it leaves open the approximability of multi-machine settings for the min-sum objective. Further research may also ask for the  complexity status when assuming that jobs have different release dates and for other natural objective functions such as average and maximum flow-time.

Our unrelated machine model is time-slot based, that is, a \reservation\ decision is made for a time slot and then all machines in this time slot are available. No less relevant appears to be the model with {\em machine-individual} \tffs, that is, a \reservation\ decision is made for a time slot on each machine
individually. It is not difficult to see that a standard LP can be adapted for optimally solving~\abc{R}{pmtn,r_j}{C_{\max}} with fractional \reservation\ cost. However, if time slots can be \reserved\
only integrally then the integrality gap for the simple LP is unbounded and the problems seems much harder.

Time-varying cost or tariffs appear in many applications in practice but they are hardly
 investigated from a theoretical perspective. With our work we settle the complexity status and approximability status for very classical scheduling problems. We hope to foster further research on this framework of time-varying costs or tariffs. We emphasize that the framework is clearly not restricted to cost-aware scheduling problems. Virtually any problem in which scarce resources are to be rented from some provider lends itself to be modelled in this way, with (vehicle) routing problems as a directly appealing example.

\bibliographystyle{abbrv} \bibliography{literature}
\end{document}